\documentclass[11pt]{article}

\usepackage[letterpaper,margin=1in]{geometry}
\usepackage[T1]{fontenc}
\usepackage[utf8]{inputenc}
\usepackage{lmodern}
\usepackage{microtype}
\usepackage{amsmath,amssymb,amsthm,mathtools}
\usepackage{booktabs,tabularx,array}
\usepackage{enumitem}
\usepackage{float}
\usepackage{placeins}
\usepackage{xcolor}
\usepackage[numbers]{natbib}
\usepackage{xspace}
\usepackage{tikz}
\usetikzlibrary{arrows.meta,positioning,calc,fit,backgrounds,decorations.pathreplacing,shapes.geometric}
\usepackage{hyperref}
\usepackage{aliascnt}
\usepackage[nameinlink,noabbrev]{cleveref}

\definecolor{ravblue}{RGB}{25,74,120}
\definecolor{ravlight}{RGB}{243,247,251}
\definecolor{ravgreen}{RGB}{32,112,72}
\definecolor{ravorange}{RGB}{164,91,18}
\definecolor{ravgray}{RGB}{91,101,111}

\usepackage{pgfplots}
\pgfplotsset{compat=1.18}
\usetikzlibrary{calc,decorations.pathreplacing}
\usepgfplotslibrary{groupplots}

\pgfplotsset{
  paperplot/.style={
    tick label style={font=\footnotesize},
    label style={font=\footnotesize},
    title style={font=\footnotesize},
    legend style={font=\footnotesize, draw=none, fill=none},
    axis line style={black!70},
    tick style={black!70},
    every axis plot/.append style={line width=0.8pt},
  }
}

\hypersetup{
  pdftitle={Threshold Monotonicity of Sampford Apportionment},
  colorlinks=true,
  linkcolor=ravblue,
  citecolor=ravblue,
  urlcolor=ravblue
}

\usepackage{xcolor}
  \newcommand{\harisnew}[1]{\textcolor{black}{#1}}

\newcommand{\restatedhead}{}
\newtheorem*{restatedmainthmaux}{\restatedhead}

\newtheorem{theorem}{Theorem}[section]
\newaliascnt{lemma}{theorem}
\newtheorem{lemma}[lemma]{Lemma}
\aliascntresetthe{lemma}
\newaliascnt{proposition}{theorem}

\aliascntresetthe{proposition}
\newaliascnt{corollary}{theorem}
\newtheorem{corollary}[corollary]{Corollary}
\aliascntresetthe{corollary}
\theoremstyle{definition}
\newaliascnt{definition}{theorem}
\newtheorem{definition}[definition]{Definition}
\aliascntresetthe{definition}
\newaliascnt{example}{theorem}
\newtheorem{example}[example]{Example}
\aliascntresetthe{example}
\theoremstyle{remark}
\newaliascnt{remark}{theorem}
\newtheorem{remark}[remark]{Remark}
\aliascntresetthe{remark}

\crefname{equation}{Equation}{Equations}
\crefname{figure}{Figure}{Figures}
\crefname{table}{Table}{Tables}
\crefname{section}{Section}{Sections}
\crefname{subsection}{Section}{Sections}
\crefname{mainthm}{Theorem}{Theorems}
\Crefname{mainthm}{Theorem}{Theorems}
\crefname{theorem}{Theorem}{Theorems}
\crefname{lemma}{Lemma}{Lemmas}
\crefname{proposition}{Proposition}{Propositions}
\crefname{corollary}{Corollary}{Corollaries}
\crefname{definition}{Definition}{Definitions}
\crefname{example}{Example}{Examples}
\crefname{remark}{Remark}{Remarks}

\newcommand{\N}{\mathbb{N}}
\newcommand{\Nzero}{\mathbb{N}_0}
\newcommand{\R}{\mathbb{R}}
\newcommand{\E}{\mathbb{E}}
\newcommand{\SD}{\mathrm{SD}}
\newcommand{\Samp}{\mathrm{Samp}}

\setlist[itemize]{topsep=4pt,itemsep=2pt,parsep=1pt,leftmargin=1.5em}
\setlist[enumerate]{topsep=4pt,itemsep=3pt,parsep=1pt,leftmargin=1.7em}
\title{\textbf{Sampford Apportionment Satisfies \\Threshold Monotonicity}}
\author{%
  Haris Aziz\\
  {\small UNSW Sydney}\\
  {\small\texttt{haris.aziz@unsw.edu.au}}
  \and
  Simon Mackenzie\\
  {\small UNSW Sydney}\\
  {\small\texttt{simon.william.mackenzie@gmail.com}}
  \and
  Mashbat Suzuki\\
  {\small UNSW Sydney}\\
  {\small\texttt{mashbat.suzuki@unsw.edu.au}}%
}
\date{}

\begin{document}
\maketitle

\begin{abstract}
Apportionment distributes a fixed number of legislative seats among political
parties in proportion to their vote shares. Randomization can satisfy quota in
every realization and exact proportionality in expectation, but these
requirements determine only the parties' marginal seat distributions. We study
threshold monotonicity, which requires that when every standard quota inside a
coalition weakly increases and every standard quota outside it weakly decreases,
the coalition's new seat total stochastically dominates its old one. We prove
that Sampford apportionment satisfies threshold monotonicity, resolving the
threshold-monotonicity conjecture of Correa et al. (2024).
Complementing this result, we prove that no quota-respecting, ex-ante
proportional method satisfies pairwise threshold monotonicity even for disjoint
coalitions when there are at least seven parties. This impossibility applies
whether or not the method has full support.
\end{abstract}

\medskip
\noindent\textbf{Keywords:}
Apportionment, Randomized rounding, Sampford sampling,
Threshold monotonicity, Stochastic dominance

\section{Introduction}
\label{sec:introduction}

An apportionment problem asks how a house of size $h$ should be divided among
$n$ political parties in proportion to their vote totals. If party $i$ is
entitled to the standard quota $q_i$, integrality generally prevents it from
receiving exactly $q_i$ seats. A classical requirement is \emph{quota}: party
$i$ must receive either $\lfloor q_i\rfloor$ or $\lceil q_i\rceil$ seats.
Deterministic methods face a classical conflict between quota and the most
natural population-monotonicity requirement \citep{BaYo01a}. Randomization
avoids this particular
impasse. As observed by \citet{Grim04a}, a method may satisfy quota in every
realization while awarding each party exactly its standard quota in expectation. 

Exact expectations describe only individual marginals, while political
questions are often joint. A coalition may care about a majority, a blocking
minority, or another important threshold.
Two randomized methods with the same marginal
inclusion probabilities can assign very different probabilities to these
events. More troublingly, a coalition whose parties all become proportionally
stronger can become \emph{less} likely to reach a given threshold if the
rounding decisions are correlated in the wrong way. Such a reversal can also
create strategic incentives for voters who evaluate outcomes through the total
number of seats won by an approved set of parties.
To exclude this behaviour, \citet{CGST+24a,CGST+26a} introduced \emph{threshold
monotonicity}. In its quota-based form, the axiom requires the total number of
seats received by a reinforced coalition to increase in stochastic dominance.
This is the relevant comparison for every nondecreasing utility of
the coalition's seat total, and hence simultaneously controls all threshold
events.

\begin{example}\label{example:AB}
The following example explains why monotonicity should concern \emph{seat
thresholds}, not merely which parties are rounded up.  Consider a house of
 four seats and parties $A,B,C$, where $A$ and $B$ form an alliance.  Quotas
 $(0.9,0.9,2.2)$ leave two seats to be awarded at random.  In this instance,
 quota and ex ante proportionality already determine the relevant
 probabilities, independently of the method: exactly two parties are rounded
 up, with marginal probabilities $0.9,0.9,0.2$.  The alliance therefore
 receives two seats exactly when $C$ is not rounded up, an event of probability
 $0.8$.

Suppose the quotas change to $(1.03,0.92,2.05)$: both allies gain and their
opponent loses.  Party $C$ now receives a third seat with probability $0.05$,
so the alliance receives at least two seats with probability $0.95$.  Its seat
distribution has improved.  Nevertheless, the probability that $A$ and $B$
are both \emph{rounded up} falls from $0.8$ to zero, because $A$'s formerly
random seat has become part of its lower quota.  Comparing seat thresholds
correctly treats guaranteed and randomly awarded seats alike. Figure~\ref{fig:example} displays both comparisons.
\end{example}

\begin{figure}[H]
\centering
\begin{tikzpicture}
\begin{groupplot}[
  group style={group size=2 by 1, horizontal sep=1.6cm},
  paperplot,
  height=4.6cm,
  ybar, /pgf/bar width=9.5pt,
  ymin=0, ymax=1.15,
  ytick={0,0.2,...,1},
  enlarge x limits=0.35,
  axis x line*=bottom, axis y line*=left,
  axis line style={black!45},
  tick style={black!45},
  ymajorgrids, grid style={black!12},
  nodes near coords, nodes near coords style={font=\scriptsize},
  point meta=rawy,
]
\nextgroupplot[
  width=6.6cm,
  xtick={1,2,3}, xticklabels={$k=1$,$k=2$,$k=3$},
  ylabel={$\Pr[X_{\{A,B\}}\ge k]$},
  legend to name=leg:example, legend columns=2,
]
\addplot[fill=ravgray!28, draw=ravgray!75, area legend,
  nodes near coords style={text=ravgray}] coordinates {(1,1) (2,0.80) (3,0)};
\addplot[fill=ravblue!88, draw=ravblue, area legend,
  nodes near coords style={text=ravblue}] coordinates {(1,1) (2,0.95) (3,0)};
\legend{{$q=(0.9,0.9,2.2)$},{$q'=(1.03,0.92,2.05)$}}
\nextgroupplot[
  width=4.2cm,
  xtick={1}, xticklabels={{$\Pr[A\text{ and }B\text{ rounded up}]$}},
  ymin=0, ymax=1.15,
  xmin=0.45, xmax=1.55,
]
\addplot[fill=ravgray!28, draw=ravgray!75,
  nodes near coords style={text=ravgray}] coordinates {(1,0.80)};
\addplot[fill=ravblue!88, draw=ravblue,
  nodes near coords style={text=ravblue}] coordinates {(1,0)};
\end{groupplot}
\end{tikzpicture}\\[2pt]
\pgfplotslegendfromname{leg:example}
\caption{The alliance $\{A,B\}$ of Example~\ref{example:AB} with house size four.
Left: upper-tail probabilities of the alliance's seat total before and after
the change from $q$ to $q'$; every tail weakly increases, so the seat
distribution improves in stochastic dominance. Right: the probability that
both allies are rounded up falls from $0.8$ to $0$, because $A$'s formerly
random seat has become part of its lower quota. Monotonicity must therefore
be stated in terms of seat thresholds rather than rounding events.}
\label{fig:example}
\end{figure}

\citet{CGST+26a} describe threshold monotonicity as a general notion of
strategy-proofness for randomized apportionment: a voter cannot profit by
supporting a disapproved party, and the same protection extends to coalitions
of voters.
\citet{CGST+24a,CGST+26a} conjectured that Sampford's quota-compliant, ex-ante
proportional method satisfies threshold monotonicity for arbitrary coalitions.
The conjecture also appears in a 2025 survey~\citep{Aziz25a}.

Sampford sampling \citep{Samp67a} is a classical unequal-probability sampling
scheme: here it selects the parties to be rounded up with inclusion
probabilities equal to their quota residues. Threshold monotonicity is stronger
than the selection monotonicity studied by \citet{CGST+26a}, who proved that
Sampford's method satisfies selection monotonicity. Threshold monotonicity
implies this property. The empty-sample case is immediate. Otherwise, use
$q=p$, $q'=p'$, and $h=\sum_i p_i=\sum_i p'_i$. A feasible set is selected
exactly when its $h$ members receive $h$ seats in total. Selection monotonicity
therefore compares the probability of a specified feasible set when its
members' residues increase and all other residues decrease.
It controls one specified set of rounded-up parties. However, a
threshold event is a union of many such outcomes.  After a coalition is
reinforced, some outcomes in this union can become less likely because they
also round up parties outside the coalition.  The selection-monotonicity
inequalities therefore cannot simply be added: a proof for threshold monotonicity must show that losses in
some outcomes are compensated by gains in others.

\subsection{Contributions}

We prove the main conjecture of \citet{CGST+24a,CGST+26a}, giving
the first established compatibility of quota, ex-ante proportionality, and
threshold monotonicity.  
The proof moves along the straight segment from the original quota vector to the reinforced one. Between integer crossings, we condition on whether the varied party is selected. The only possibly negative contribution to the derivative of a threshold probability is then a single covariance. Newton's inequalities confine that contribution to outcomes falling exactly one coalition member short of the threshold. A coefficient bound and double count show that the positive term in the derivative is at least as large. Along the segment, coalition residues rise and outside residues fall, so every threshold probability is nondecreasing between crossings. At a crossing, a seat merely changes its description, from residual to guaranteed or conversely, without changing the complete allocation law. Continuity there joins the pieces.


We also prove a complementary impossibility for the pairwise threshold monotonicity axiom introduced
by \citet{CGST+26a}. This axiom compares a coalition whose members gain with a
second coalition whose members lose, leaves all other parties unrestricted,
and requires at least one of the two coalition seat distributions to move in
the expected direction. Even when the tested coalitions are required to be
disjoint, quota and ex-ante proportionality are incompatible with this axiom
for every $n\geq7$. Unlike the impossibility theorem of \citet{CGST+26a}, our
argument does not assume full support. Our impossibility result can be viewed as a randomized and coalitional analogue of the well-known Balinski--Young impossibility theorem, which says that no quota-compliant deterministic method satisfies population monotonicity~\citep{BaYo01a}. The proof is graph-theoretic: proportionality turns the two-seat lottery into a fractional matching with prescribed vertex degrees, and, at the constructed profile, pairwise threshold monotonicity forces the positive edges between the small parties to be pairwise intersecting.

\section{Related work}
\label{sec:related-work}



\paragraph{Apportionment.}
Classical apportionment theory balances quota, population monotonicity, and
house monotonicity \citep{BaYo01a,Puke14a,RoUl10a,Szpi10a}.
House monotonicity requires that an increase in the total number of seats
should not hurt any party.  Population monotonicity, as \citet{Youn04a}
succinctly puts it, requires that ``a growing state should not give up seats
to a shrinking state''; its violation is the population
paradox~\citep[pp.~42--43]{BaYo01a}.  The Balinski--Young impossibility
theorem shows that the deterministic quota requirement is incompatible with
population monotonicity~\citep{BaYo01a,GPP25a}. More basically, seat
indivisibility prevents any deterministic method from giving every party its
exact fractional entitlement. Randomization changes the comparison. 

\paragraph{Randomized apportionment.}

 Although there is relatively less work on randomized apportionment, even countries committed to deterministic apportionment rely on chance at the margins. 
  Almost every divisor or quota method needs a tie-break rule, and a common one is the random draw which is used in some countries including Germany~\citep[Chapter 4]{Puke14a}.
 \citet{Grim04a} combines ex-post quota
with exact proportionality in expectation; \citet{GPP25a} add ex-post house
monotonicity through cumulative rounding; and \citet{CCS+25a} characterize the
randomized methods satisfying quota, ex-ante proportionality, and house
monotonicity via network-flow formulations of quota-compliant, house-monotone
paths and lotteries over them.  Proportional lotteries also arise in peer
selection and random allocation \citep{Aziz25a,ALM+19a,BCKM13a}.  These
guarantees either vary the house size or concern a single vote profile.
\citet{CGST+26a} instead keep the house fixed and ask how the \emph{joint}
distribution of rounded-up parties responds when a coalition is reinforced.
Their threshold monotonicity can be viewed as a coalition-level generalization
of population monotonicity for randomized methods.  They proved selection
monotonicity for Sampford rounding and conjectured the arbitrary-coalition
threshold result proved here.

\paragraph{Unequal-probability sampling.}
After lower quotas have been awarded, the residual allocation is exactly a
fixed-size unequal-probability ($\pi$ps) sampling problem: construct a
distribution over $k$-element samples whose first-order inclusion
probabilities equal prescribed values $\pi_i\in[0,1]$ with $\sum_i\pi_i=k$,
here with $\pi_i=p_i$ \citep{BrHa83a}.  
Sampford's design \citep{Samp67a} is one of these methods. 
The selection and threshold axioms of \citet{CGST+26a} draw
on systematic and unequal-probability sampling \citep{Mado49a,HoTh52a,HLS89a},
maximum-entropy designs \citep{CDL94a}, and the pivotal method
\citep{DeTi98a,Till06a,Till23a}.  Whereas the sampling problem starts from one
list of prescribed individual probabilities, the present question compares
every coalition threshold across two such profiles.

\paragraph{Dependent rounding.}
In computer science, related procedures are studied under the name dependent
rounding and use negative dependence for concentration and approximation guarantees
\citep{AgSv04a,BrJo12a,CJMW20a,GKPS06a}.  These describe one lottery
at one profile, whereas threshold monotonicity compares two.  

\paragraph{Fair Division.}
Apportionment can also be viewed as a special case of fair allocation with entitlements where the items 
are identical and homogeneous. For more on weighted fair division, see the survey of \citet{Suks25a}.
In that respect, randomized apportionment can be viewed as best-of-both-worlds fair division~\citep{AFS+23a} where we aim for fairness guarantees ex-ante and ex-post.

\section{Preliminaries}
\label{sec:preliminaries}

Let $N=[n]=\{1,\ldots,n\}$ be the set of parties. For a finite set $G$ and an
integer $r$, write $\binom{G}{r}=\{D\subseteq G:|D|=r\}$.

\subsection{From quota residues to a fixed-size lottery}

Let $v=(v_1,\ldots,v_n)\in\R_{\geq0}^n$ be a vote vector with
$\sum_{i\in N}v_i>0$, and let $h\in\N$ be the house size. An
\emph{apportionment} is a vector $x\in\Nzero^n$ satisfying
$\sum_{i\in N}x_i=h$. An \emph{apportionment method} $\mathcal A$ maps every
pair $(v,h)$ to a random apportionment $\mathcal A(v,h)$.

Party $i$'s \emph{standard quota}, \emph{lower quota}, and \emph{residue} are,
respectively,
\begin{equation}
q_i=\frac{h v_i}{\sum_{j\in N}v_j},
\qquad
b_i=\lfloor q_i\rfloor,
\qquad
p_i=q_i-b_i.
\label{eq:quota-residue}
\end{equation}
The number of seats remaining after all lower quotas have been awarded is
\begin{equation}
\alpha=h-\sum_{i\in N}b_i=\sum_{i\in N}p_i.
\label{eq:alpha}
\end{equation}
Thus $p\in[0,1)^n$ and $\alpha\in\{0,\ldots,n-1\}$.

An apportionment method satisfies \emph{quota} if every party receives either
its lower or upper quota in every realization.  It is \emph{ex ante
proportional} if $\E[\mathcal A(v,h)_i]=q_i$ for every party $i$.

Following \citet{CGST+26a}, a \emph{selection lottery} $r$ takes as input a
vector $p\in[0,1)^n$ whose coordinates sum to an integer $\alpha$ and returns
a random set
\[
r(p)\in\binom{N}{\alpha}
\quad\text{such that}\quad
\Pr[i\in r(p)]=p_i\quad(i\in N).
\]
Every selection lottery induces an apportionment method by awarding lower
quotas and then one additional seat to each selected party:
\begin{equation}
\mathcal A^r(v,h)_i=b_i+\mathbf 1\{i\in r(p)\}.
\label{eq:induced-method}
\end{equation}
The induced method satisfies quota and ex ante proportionality. Selection
lotteries are also studied as unequal-probability sampling without replacement
and as dependent randomized rounding
\citep{BrHa83a,DeTi98a,GKPS06a}.

\subsection{Sampford rounding and its prescribed marginals}

For a residue vector $p\in [0,1)^n$, let
\[
 N'=\{i\in N:0<p_i<1\}.
\]
That is, $N'$ is  the set of parties
with fractional quotas.  The marginal condition in the definition of a
selection lottery already implies that parties outside $N'$ are never
selected.

\begin{remark}[Why not simply condition independent rounding?]
\label{rem:why-odds}
Selecting each party in $N'$ independently with probability $p_i$ gives the
desired individual probabilities, but the number selected is random.
Conditioning on selecting exactly $\alpha$ parties fixes the size.  For
$S\in\binom{N'}{\alpha}$,
\[
 \prod_{i\in S}p_i\prod_{j\in N'\setminus S}(1-p_j)
 =
 \left(\prod_{j\in N'}(1-p_j)\right)
 \prod_{i\in S}\frac{p_i}{1-p_i}.
\]
The first factor is common to every $S$, so the conditioned sampling weights $S$
by the product of the odds $w_i=p_i/(1-p_i)$.  This explains the odds in
Sampford's rule, but it does not recover the desired marginals: conditioning
generally changes each party's selection probability.  The odds merely
rewrite the original independent-rounding  after conditioning; they are not
recalibrated probabilities.  Conditional Poisson sampling can instead search
for different starting probabilities whose conditional marginals are $p_i$
\citep{Haje81a,CDL94a,Till06a}.  Sampford instead corrects the conditioned set
weights themselves.
\end{remark}

\begin{definition}[Sampford rounding]
\label{def:sampford}
For $\alpha=0$, set $r^{\Samp}(p)=\varnothing$.  For $\alpha\geq1$, all draws are
from $N'$.  Choose the distinguished party by giving each $i\in N'$ probability
$p_i/\alpha$.  Fill the other $\alpha-1$ slots independently with replacement,
giving each $i\in N'$ probability $w_i/\sum_{j\in N'}w_j$ on every draw, where
$w_i=p_i/(1-p_i)$.  
\harisnew{If all $\alpha$
parties are distinct and form some set $S$, select $S$ and set
$r^{\Samp}(p)=S$; otherwise restart.}
The resulting
selection lottery is \emph{Sampford rounding}, and the method induced through
\eqref{eq:induced-method} is the \emph{Sampford method}.  Parties outside $N'$
have zero residue and are never selected.
\end{definition}

The distinguished draw is simply one differently weighted slot.  Calling it
the ``first'' draw describes a convenient implementation; it is not a
preliminary reweighting phase and could just as well be sampled last.
Polynomial-time implementations that avoid restarts are also available
\citep{Graf09a}.
Assume for the rest of this derivation that $\alpha\geq1$.
For any $D\subseteq N'$, define its \emph{correction factor}
\begin{equation}
c_p(D)=\sum_{i\in D}(1-p_i).
\label{eq:correction-factor}
\end{equation}
If $S\in\binom{N'}{\alpha}$, then
\[
 c_p(S)=\sum_{j\in N'\setminus S}p_j,
\]
because $|S|=\sum_{i\in N'} p_i=\alpha$.  Thus $c_p(S)$ is both the total amount by
which the selected parties are rounded up and the fractional mass left outside
$S$.

The distinguished slot produces the following correction.  Fix a
possible output $S$.  If $i\in S$ occupies that slot, its contribution, apart
from factors common to every output, is
\[
p_i\prod_{j\in S\setminus\{i\}}w_j
=(1-p_i)\prod_{j\in S}w_j.
\]
The remaining elements can occur in any order, but the number of orderings is
the same for every $S$ and cancels when we condition on distinct draws.
Summing over the possible distinguished party gives
\[
\sum_{i\in S}p_i\prod_{j\in S\setminus\{i\}}w_j
=c_p(S)\prod_{j\in S}w_j.
\]
Thus the distinguished draw multiplies the conditioned independent-rounding
weight by exactly the correction factor $c_p(S)$. The restart construction
therefore gives, for $S\in\binom{N'}{\alpha}$,
\begin{equation}
 \Pr_p[r^{\Samp}(p)=S]
 =\frac{
  c_p(S)\displaystyle\prod_{i\in S}p_i
       \displaystyle\prod_{j\in N'\setminus S}(1-p_j)}{
  \displaystyle\sum_{R\in\binom{N'}{\alpha}}
  c_p(R)\prod_{i\in R}p_i
       \prod_{j\in N'\setminus R}(1-p_j)}.
\label{eq:sampford-pmf}
\end{equation}

For $\alpha\geq1$, this formula covers every residue vector in $[0,1)^n$:
zero-residue parties have already been omitted.  The case $\alpha=0$ was
defined separately.  One-sided limits in which a residue approaches one are
treated in the proof of Theorem~\ref{thm:sampford-threshold}; see also
\citet[Equation~(2)]{CGST+26a}.

The correction factor restores the prescribed marginals. To see this, fix a
party $k\in N'$ and use the equivalent raw weights
$c_p(S)\prod_{i\in S}w_i$.  Dividing the total weight of sets containing $k$
by $w_k$ gives
\begin{align*}
&\frac{1}{w_k}
 \sum_{\substack{S\in\binom{N'}{\alpha}\\k\in S}}
 c_p(S)\prod_{i\in S}w_i\\
&\quad=
 \sum_{D\in\binom{N'\setminus\{k\}}{\alpha-1}}
 \left(\sum_{\ell\in N'\setminus(D\cup\{k\})}p_\ell\right)
 \prod_{i\in D}w_i\\
&\quad=
 \sum_{S\in\binom{N'\setminus\{k\}}{\alpha}}
 c_p(S)\prod_{i\in S}w_i.
\end{align*}
The middle equality pairs $D$ with a party
$\ell\in N'\setminus(D\cup\{k\})$, puts $S=D\cup\{\ell\}$, and uses
$p_\ell=(1-p_\ell)w_\ell$.  The final expression is the total weight of sets
that omit $k$.  Sets containing $k$ therefore have $w_k$ times as much total
weight as sets omitting it, and hence
\[
\Pr[k\in r^{\Samp}(p)]
=\frac{w_k}{1+w_k}
=p_k.
\]
Thus the correction factor created by the distinguished draw exactly repairs
the marginal distortion caused by conditioning.  Parties outside $N'$ have
both selection probability and prescribed marginal equal to zero, so Sampford
rounding has the prescribed inclusion probabilities \citep{Samp67a}.

For example, let $p=(0.9,0.9,0.2)$ and $\alpha=2$.  Conditioning independent
rounding gives the pairs $\{A,B\},\{A,C\},\{B,C\}$ odds-product weights
$(81,2.25,2.25)$, hence probabilities
$(18/19,1/38,1/38)$ and marginals $(37/38,37/38,1/19)$---not
$(0.9,0.9,0.2)$. The three correction factors are $(0.2,0.9,0.9)$.
Multiplying by them and normalizing gives Sampford probabilities
$(0.8,0.1,0.1)$, whose marginals are exactly $(0.9,0.9,0.2)$.

\subsection{Coalition reinforcement and stochastic dominance}
\begin{definition}[Stochastic dominance]
For nonnegative integer-valued random variables $X$ and $Y$, write
$X\succeq_{\SD}Y$ if
\[
\Pr[X\geq t]\geq\Pr[Y\geq t]
\quad\text{for every }t\in\Nzero.
\]
Equivalently, $\E[u(X)]\geq\E[u(Y)]$ for every nondecreasing function $u$ for
which the expectations exist. This is also called first-order stochastic
dominance.
\end{definition}

For a coalition $T\subseteq N$, define its random seat total under method
$\mathcal A$ by
\begin{equation}
X_T(v,h)=\sum_{i\in T}\mathcal A(v,h)_i.
\label{eq:coalition-total}
\end{equation}

\begin{definition}[Threshold monotonicity \citep{CGST+26a}]
\label{def:threshold-monotonicity}
Let $v,v'\in\R_{\geq0}^n$ be two vote vectors, each with positive total, let
$h\in\N$, and let $q,q'$ be the corresponding standard-quota vectors. An
apportionment method $\mathcal A$ satisfies \emph{threshold monotonicity} if,
for every coalition $T\subseteq N$,
\begin{equation}
q'_i\geq q_i\quad(i\in T),
\qquad
q'_j\leq q_j\quad(j\notin T)
\label{eq:coalition-reinforcement}
\end{equation}
implies
\[
X_T(v',h)\succeq_{\SD}X_T(v,h).
\]
\end{definition}

The Sampford method depends on $(v,h)$ only through the standard-quota vector.
Accordingly, we also write $X_T(q)$ for the random variable in
\eqref{eq:coalition-total}.

Under Sampford rounding,
\[
X_T(q)
=
\sum_{i\in T}\lfloor q_i\rfloor
+\bigl|r^{\Samp}(p)\cap T\bigr|.
\]
Thus, while the lower quotas are fixed, the only random part of the
coalition's seat total is its number of members in the residual sample.

\section{Threshold monotonicity of the Sampford method}
\label{sec:threshold-sampford}

\harisnew{In this section, we prove that the Sampford method satisfies threshold monotonicity.}

\begin{theorem}[Sampford threshold monotonicity]
\label{thm:sampford-threshold}
The Sampford method satisfies threshold monotonicity.
\end{theorem}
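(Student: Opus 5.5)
We follow the strategy announced in the contributions: move along the segment $q(s)=(1-s)q+sq'$, $s\in[0,1]$, and show that every tail probability $F_t(s)=\Pr[X_T(q(s))\ge t]$ is nondecreasing in $s$. Along the segment each coordinate moves monotonically: coalition quotas weakly rise and outside quotas weakly fall. Hence only finitely many values of $s$ exist at which some coordinate crosses an integer. On each open interval between consecutive crossings the lower quotas $b$ are constant, $N'$ is constant, and $\alpha$ is constant (since $\sum_i q_i(s)=h$). Here $F_t(s)=\Pr[|r^{\Samp}(p(s))\cap T|\ge t-\sum_{i\in T}b_i]$ is a rational function of $p$ given by \eqref{eq:sampford-pmf}. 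It therefore suffices to prove two things. First, on a fixed fractional pattern, for every threshold $m$, the probability $\Pr_p[|S\cap T|\ge m]$ has nonnegative directional derivative in every direction $d$ with $d_i\ge0$ on $T$, $d_j\le 0$ off $T$, and $\sum d=0$. Second, the law of the full allocation $X_T$ is continuous in $s$ at crossings.

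\textbf{Continuity at crossings.} Suppose a residue $p_k\to1^-$. Then in \eqref{eq:sampford-pmf} sets omitting $k$ carry the factor $(1-p_k)\to0$ relative to sets containing $k$; more simply, the marginal $\Pr[k\in S]=p_k\to1$. So in the limit $k$ is always selected, and the conditional law of $S\setminus\{k\}$ tends to Sampford on the remaining residues with $\alpha-1$ (this needs a short check on the correction factor: $c_p(S)$ loses the term $1-p_k\to0$). This coincides with the law obtained by treating $k$'s seat as part of $b_k=\lfloor q_k\rfloor$. Symmetrically, $p_k\to0^+$ makes $k$ never selected, matching its removal from $N'$. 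Hence each $F_t$ is continuous across crossings, and piecewise monotone plus continuous gives $F_t(1)\ge F_t(0)$.

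\textbf{Derivative on a fixed pattern.} Since any admissible direction $d$ is a nonnegative combination of moves $e_i-e_j$ with $i\in T$, $j\notin T$, it suffices to treat such a transfer. Moving mass from $j$ to $i$ is harmless if done in two stages through the sign structure. I would instead differentiate in a single coordinate $p_k$ with compensation, as the paper suggests. Write $\Pr[|S\cap T|\ge m]=p_k\Pr[\,\cdot\mid k\in S]+(1-p_k)\Pr[\,\cdot\mid k\notin S]$. For Sampford, the conditional laws given $k\in S$ and $k\notin S$ are again explicit weighted designs built from elementary symmetric–type sums of the odds $w$. The derivative then splits into two parts. The first is a ``direct'' term $\Pr[\cdot\mid k\in S]-\Pr[\cdot\mid k\notin S]$; for $k\in T$ this is nonnegative, because $k$ itself counts. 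The second is a term from how the other parties' conditional weights change, which takes the form of a covariance between $\mathbf 1\{|S\cap T|\ge m\}$ and a linear statistic such as $c_p(S)^{-1}$ or $\sum_{l\in S}(1-p_l)$. I would show this covariance is harmful only on outcomes with $|S\cap T|=m-1$ and $k\notin S$, using Newton's inequalities for elementary symmetric polynomials of the odds restricted to $T$ and to $N'\setminus T$ (log-concavity of the count distribution). Finally, a coefficient bound combined with double counting pairs (outcome with $m-1$, swap one outsider for a coalition member) should dominate the negative part by the direct positive term.

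\textbf{Main obstacle.} The main obstacle is this last inequality. It requires controlling the derivative of the correction factor $c_p(S)$ and the normalizing constant simultaneously, so that the negative covariance is bounded by the positive boundary mass at $|S\cap T|=m-1\to m$. My first attempt would be to write both sides as sums over pairs $(D,\ell)$ with $|D\cap T|=m-1$, then compare term by term using $p_\ell=(1-p_\ell)w_\ell$. This is the same identity that verified the marginals above, and I would invoke Newton's inequalities only where the term-by-term comparison fails.
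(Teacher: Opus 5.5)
Your outer architecture matches the paper's: the straight path $q(s)$, monotonicity of each tail on the open intervals between integer crossings, and continuity of the complete allocation law at the crossings. Your continuity sketch is essentially right, though you should also allow several coordinates to cross at once. You should also treat the limit in which no fractional coordinate survives; there the paper uses the marginal bound $\sum_{\bar p_x=1}(1-p_x)+\sum_{\bar p_x=0}p_x\to0$.

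The genuine gap is the derivative inequality on a fixed fractional pattern. This step carries the entire difficulty of the theorem, and you leave it as a plan. Two parts of the plan do not yet work.

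\textbf{First}, your split $p_k\Pr[\cdot\mid k\in S]+(1-p_k)\Pr[\cdot\mid k\notin S]$ is valid only on the hyperplane $\sum_x p_x=\alpha$. You must therefore differentiate along a compensated direction such as $e_k-e_j$. Then your ``second term'' involves how both conditional laws move in $p_k$ \emph{and} in $p_j$, and you never identify it beyond ``a covariance with $c_p(S)^{-1}$ or $\sum_{l\in S}(1-p_l)$''. The paper instead fixes one smooth extension off the hyperplane, with weights $c_p(S)\prod_{x\in S}w_x$ normalized. Under this extension, outcomes omitting $i$ have weights independent of $p_i$. The paper proves the sign of each single-coordinate partial, and only then uses feasibility and the chain rule along the path.

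\textbf{Second}, your claim that the direct term is nonnegative ``because $k$ itself counts'' is unsupported. Given $k\in S$, the other members form a tilted $(\alpha-1)$-design; given $k\notin S$, they form an $\alpha$-design. Comparing $1+|D\cap T|$ with $|S\cap T|$ in stochastic dominance is itself an adjacent-size comparison that needs a Newton-type argument.

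The missing idea is the specific reduction that makes the inequality provable.
\begin{enumerate}
\item Fix $i\in T\cap N'$. Draw an $(\alpha-1)$-subset $H$ of $G=N'\setminus\{i\}$ with probability proportional to $\prod_{x\in H}w_x$ alone (not Sampford), and put $Y=|H\cap T|$.
\item The insertion identity $c_p(S)\prod_{x\in S}w_x=\sum_{x\in S}p_x\prod_{y\in S\setminus\{x\}}w_y$ writes every relevant weight as an expectation over $H$. The quotient rule then gives the numerator $(\E[c_p(H)]+(1-p_i)^2)B+p_i\operatorname{Cov}(c_p(H),\mathbf 1\{Y\ge k-1\})$, where $B=\E[\mathbf 1\{Y=k-1\}\sum_{x\in G\setminus H,\,x\notin T}p_x]\ge0$.
\item Expand the covariance over $x\in G$. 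Apply $|H_s\cap U|\preceq_{\SD}|H_{s-1}\cap U|+1$ (Newton's inequalities) to the laws given $x\in H$ and $x\notin H$. This shows that members of $T$ contribute nonnegatively, and each outsider costs at most $(1-p_x)\Pr[x\in H]\Pr[Y=k-1,\,x\notin H]$.
\item The coefficient bound $(1-p_x)\Pr[x\in H]\le p_x\E[c_p(H)]$ uses feasibility through $\sum_{r\in G\setminus J}p_r=c_p(J)+2-p_i\ge1$ for $|J|=\alpha-2$. It gives $\operatorname{Cov}\ge-\E[c_p(H)]B$, so the derivative is at least $(1-p_i)B/(\E[c_p(H)]+1-p_i)\ge0$.
\item Outsiders follow by applying this bound to the complementary coalition.
\end{enumerate}
Note that the loss is controlled only after aggregating over all outsiders against the global quantity $\E[c_p(H)]$, not by matching individual pairs $(D,\ell)$. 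Until you supply this aggregate bound or a verified matching, the central step of your proof is missing.
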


Between integer crossings, lower quotas are fixed.  We will show that the
partial derivative of an upper-tail probability with respect to a party's
residue is nonnegative for a party in the coalition and nonpositive for a
party outside it.  Along the straight reinforcement path, coalition residues
rise and outside residues fall, so these signs make every upper-tail
probability nondecreasing.

\subsection{Adjacent product-weight sample sizes}

Although the Sampford method itself includes the correction factor \(c_p(S)\), our proof uses auxiliary fixed-size samples whose probabilities are proportional only to the product of their odds weights. We need the following elementary fact about such samples.

\begin{lemma}
\label{lem:adjacent-product-counts}
Let $G$ be a finite set with positive weights.  For $0\leq s\leq|G|$, let
$H_s$ be an $s$-subset drawn with probability proportional to the product of
its members' weights; in particular, $H_0=\varnothing$.  Then, for every
$U\subseteq G$ and $1\leq s\leq|G|$,
\[
 |H_s\cap U|
 \ \preceq_{\SD}\
 |H_{s-1}\cap U|+1.
\]
\end{lemma}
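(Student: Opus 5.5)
The plan is to reduce the statement to a monotonicity property of a single count, and then prove that property by a likelihood-ratio argument using Newton's inequalities.

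\textbf{Reduction to the complement.} Write $V=G\setminus U$. Since $|H_s|=s$, we have $|H_s\cap U|=s-|H_s\cap V|$. Likewise $|H_{s-1}\cap U|+1=s-|H_{s-1}\cap V|$. The claimed dominance is therefore equivalent to
\[
|H_s\cap V|\ \succeq_{\SD}\ |H_{s-1}\cap V|,
\]
because reversing the sign and adding $s$ to both sides turns one relation into the other. Since $U$ is arbitrary, so is $V$. It thus suffices to show the following: for every fixed $W\subseteq G$, the count $|H_s\cap W|$ is stochastically nondecreasing in the sample size $s$.

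\textbf{Explicit law of the count.} Let $a_k=e_k(\{w_i\}_{i\in W})$ and $b_m=e_m(\{w_i\}_{i\in G\setminus W})$ be the elementary symmetric polynomials of the weights, with $e_0=1$ and $e_m=0$ outside $0\le m\le$ (number of variables). Summing the product weights over $s$-sets split by their intersection with $W$ gives
\[
\Pr\bigl[|H_s\cap W|=k\bigr]=\frac{a_k\,b_{s-k}}{e_s(G)},
\]
where $e_s(G)$ denotes the elementary symmetric polynomial of all weights in $G$.

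\textbf{Likelihood-ratio comparison.} Compare $K_s=|H_s\cap W|$ with $K_{s-1}$. On the common support, the likelihood ratio is
\[
\frac{\Pr[K_s=k]}{\Pr[K_{s-1}=k]}\ \propto\ \frac{b_{s-k}}{b_{s-1-k}}.
\]
This ratio is nondecreasing in $k$ exactly when $j\mapsto b_j/b_{j-1}$ is nonincreasing on the range where $b_{j-1}>0$.

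That monotonicity is the log-concavity $b_j^2\ge b_{j-1}b_{j+1}$. It follows from Newton's inequalities for elementary symmetric polynomials of positive reals, which give the stronger bound $\bigl(b_j/\binom{m}{j}\bigr)^2\ge \frac{b_{j-1}}{\binom{m}{j-1}}\cdot\frac{b_{j+1}}{\binom{m}{j+1}}$, where $m=|G\setminus W|$. The sequence $(b_j)$ is positive on the contiguous range $0\le j\le m$ and zero outside it, so log-concavity holds with no internal zeros. The same is true of $(a_k)$. Consequently both supports are integer intervals.

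A monotone likelihood ratio then implies first-order stochastic dominance, by the standard argument. Writing $f$ and $g$ for the two pmfs, $f/g$ is increasing, so the tail mass $\sum_{k\ge t}f(k)$ is at least $\sum_{k\ge t}g(k)$ for every $t$. This gives $K_s\succeq_{\SD}K_{s-1}$, and hence the lemma.

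\textbf{Main obstacle.} The delicate step is the boundary bookkeeping, where a ratio has a zero denominator. The support of $K_s$ is $\{\max(0,s-m),\ldots,\min(s,|W|)\}$, which is shifted by at most one relative to the support of $K_{s-1}$. Points lying only in the support of $K_s$ are at the top end, where $s-1-k<0$ but $s-k\ge 0$. There the ratio is $+\infty$, which is consistent with an increasing ratio. Points lying only in the support of $K_{s-1}$ sit at the bottom end, where $s-k>m\ge s-1-k$, and there the ratio is $0$. I would check these two cases explicitly, either by treating them as extended-real ratio values or by verifying the tail inequality directly at the extreme thresholds. After that, the Newton-inequality step is routine.
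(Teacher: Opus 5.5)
Your proposal is correct and is essentially the paper's proof. The paper also writes the law of the count through elementary symmetric polynomials and reduces the likelihood ratio to $e_{\ell+1}(U)/e_\ell(U)$, which is exactly your $b_{s-k}/b_{s-1-k}$ because $G\setminus W=U$; it then applies Newton's inequalities and a single-crossing argument, with your boundary bookkeeping matching its remark about mass outside the common support. The only difference is cosmetic: you pass to the complement and show $|H_s\cap W|$ is stochastically increasing in $s$, whereas the paper compares $|H_s\cap U|-1$ with $|H_{s-1}\cap U|$ directly.
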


\begin{proof}
If $U$ is empty, the conclusion is immediate.  Assume from now on that $U$ is
nonempty.
For $A\subseteq G$, let $e_\ell(A)$ be the sum of the products of the weights
over all $\ell$-subsets of $A$, with $e_0(A)=1$.  Then
\[
 \Pr[|H_s\cap U|=\ell]
 =\frac{e_\ell(U)e_{s-\ell}(G\setminus U)}{e_s(G)}.
\]
On the common support of $|H_s\cap U|-1$ and $|H_{s-1}\cap U|$,
\[
 \frac{\Pr[|H_s\cap U|-1=\ell]}
      {\Pr[|H_{s-1}\cap U|=\ell]}
 =
 \frac{e_{s-1}(G)}{e_s(G)}
 \frac{e_{\ell+1}(U)}{e_\ell(U)}.
\]
For $1\leq\ell\leq |U|-1$, Newton's inequalities give
\[
 e_\ell(U)^2\geq e_{\ell-1}(U)e_{\ell+1}(U),
\]
so $e_{\ell+1}(U)/e_\ell(U)$ is nonincreasing wherever the ratio is defined.
The two probability masses therefore cross at most once: the first may exceed
the second only at lower values.  Any mass outside their common support lies
below it for $|H_s\cap U|-1$ and above it for $|H_{s-1}\cap U|$.  Since both
distributions have total mass one, every upper tail of the first is no larger
than the corresponding upper tail of the second.  This proves the lemma.
\end{proof}

\subsection{How one residue changes a threshold probability}

We first work with a fixed set of fractional parties and at least two residual
seats.  The cases with at most one residual seat and the integer crossings
themselves are handled in the final subsection.

For the partial derivatives below, extend the Sampford probabilities off the
hyperplane where the residues sum to $\alpha$ by assigning each
$S\in\binom{N'}{\alpha}$ probability proportional to
$c_p(S)\prod_{x\in S}p_x/(1-p_x)$.  This gives a smooth distribution for every
$p\in(0,1)^{N'}$ and agrees with Sampford rounding whenever
$\sum_{x\in N'}p_x=\alpha$.

\begin{lemma}
\label{lem:coordinate-tail-signs}
Let $N'\subseteq N$, let $p\in(0,1)^{N'}$ satisfy
\[
 \sum_{x\in N'}p_x=\alpha\geq2
\]
for an integer $\alpha$, and let $T\subseteq N$.  For every integer $k\geq0$,
\[
 \frac{\partial}{\partial p_i}
 \Pr_p[|r^{\Samp}(p)\cap T|\geq k]\geq0
 \quad\text{for }i\in T\cap N',
\]
whereas
\[
 \frac{\partial}{\partial p_j}
 \Pr_p[|r^{\Samp}(p)\cap T|\geq k]\leq0
 \quad\text{for }j\in N'\setminus T.
\]
\end{lemma}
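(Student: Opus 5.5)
The plan is a log-derivative computation in which the auxiliary product-weight samples of \cref{lem:adjacent-product-counts} carry the comparisons. First I would reduce the second claim to the first. Since every sample has exactly $\alpha$ members, $|r^{\Samp}(p)\cap T|\geq k$ holds iff $|r^{\Samp}(p)\cap(N'\setminus T)|\leq\alpha-k$. The extended (off-hyperplane) distribution is defined on $\binom{N'}{\alpha}$ for every $p$, so this identity persists off the hyperplane. Hence, for $j\in N'\setminus T$, applying the first claim to the coalition $T^c=N'\setminus T$ (with threshold $\alpha-k+1$) shows that $\Pr[|S\cap T|\geq k]$ is nonincreasing in $p_j$. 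So it suffices to fix $i\in T\cap N'$ and show $\partial_{p_i}\Pr_p[F]\geq0$, where $F=\{|S\cap T|\geq k\}$.

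Write $W(S)=c_p(S)\prod_{x\in S}w_x$, with $c_p(S)=\sum_{x\in S}(1-p_x)$ read literally off the hyperplane. Only sets containing $i$ depend on $p_i$. Using $\partial_{p_i}\log w_i=1/(p_i(1-p_i))=:a$ and $\partial_{p_i}c_p(S)=-\mathbf 1\{i\in S\}$, we get $\partial_{p_i}\log W(S)=\mathbf 1\{i\in S\}\bigl(a-1/c_p(S)\bigr)$. Because $\Pr_p[F]=\sum_{S\in F}W(S)/\sum_S W(S)$,
\[
 \partial_{p_i}\Pr_p[F]
 = a\,\mathrm{Cov}\bigl(\mathbf 1_F,\mathbf 1\{i\in S\}\bigr)
 -\mathrm{Cov}\bigl(\mathbf 1_F,\mathbf 1\{i\in S\}/c_p(S)\bigr).
\]
I would then condition on whether $i$ is selected, so that the first covariance becomes $p_i(1-p_i)\bigl(\Pr[F\mid i\in S]-\Pr[F\mid i\notin S]\bigr)$. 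Its nonnegativity is a stochastic comparison between the $(\alpha-1)$-set $D=S\setminus\{i\}$ given $i\in S$ and the $\alpha$-set $S$ given $i\notin S$, with $U=T\setminus\{i\}$. To make this comparison I would expand the correction factor as a sum over a distinguished element, $c_p(S)\prod_S w=\sum_{x\in S}p_x\prod_{S\setminus\{x\}}w$. This expresses both conditional laws as mixtures of product-weight samples $H_s$ on $G=N'\setminus\{i\}$ together with one distinguished element. \Cref{lem:adjacent-product-counts} compares $|H_s\cap U|$ with $|H_{s-1}\cap U|+1$, and this supplies the ordering $|S\cap T|\mid\{i\notin S\}\preceq_{\SD}|D\cap U|+1\mid\{i\in S\}$ that is needed.

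The main obstacle is the second covariance, which can be positive and so enters with a negative sign. On $\{i\in S\}$, $1/c_p(S)$ is a decreasing function of the other members' $1-p_x$ and is correlated with $F$ through which elements of $T$ appear. My first attempt would follow the strategy announced in the introduction. I would write this covariance as a sum over pairs of outcomes and use Newton's inequalities ($e_\ell(U)^2\geq e_{\ell-1}(U)e_{\ell+1}(U)$, as in the proof of \cref{lem:adjacent-product-counts}) to show that all net-positive contributions come from outcomes with $|S\cap T|=k-1$, that is, exactly one coalition member short. On that boundary layer, $1/c_p(S)\leq 1/(1-p_i)$ since $c_p(S)\geq 1-p_i$ when $i\in S$. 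Together with $a=1/(p_i(1-p_i))\geq 1/(1-p_i)$, this gives a coefficient bound. A double count then matches each boundary outcome containing $i$ against outcomes in which $i$ is swapped for another member of $T$. The aim is to show that the boundary mass of the first covariance, scaled by $a$, dominates the bad part of the second.
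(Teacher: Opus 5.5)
\textbf{There is a genuine gap.} Your complement reduction for $j\in N'\setminus T$ is correct. So is your log-derivative identity, which is the paper's quotient-rule formula \eqref{eq:threshold-derivative} written differently. The first covariance is indeed nonnegative. This follows from a direct coupling: by feasibility, the law of $S\setminus\{i\}$ given $i\in S$ equals the law of $S$ minus its distinguished element given $i\notin S$. Lemma~\ref{lem:adjacent-product-counts} is not needed for that step.

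The gap is your last paragraph. Since the derivative equals $a\operatorname{Cov}(\mathbf 1_F,\mathbf 1\{i\in S\})-\operatorname{Cov}(\mathbf 1_F,\mathbf 1\{i\in S\}/c_p(S))$, the inequality you say you ``aim to show'' is the lemma itself. To see what the $1/(1-p_i)$ route would actually require, use the following notation:
\begin{itemize}
\item $H$ is the product-weighted $(\alpha-1)$-subset of $G=N'\setminus\{i\}$ and $Y=|H\cap T|$;
\item $E=\E[c_p(H)]$;
\item $B=\E[\mathbf 1\{Y=k-1\}\sum_{x\in G\setminus H,\,x\notin T}p_x]$.
\end{itemize}
A direct computation at feasible $p$ gives $\operatorname{Cov}(\mathbf 1_F,\mathbf 1\{i\in S\})=p_i(1-p_i)B/(E+1-p_i)$ and
\[
 \operatorname{Cov}\Bigl(\mathbf 1_F,\frac{\mathbf 1\{i\in S\}}{c_p(S)}\Bigr)
 =\frac{p_i\bigl((1-p_i)B-\operatorname{Cov}(c_p(H),\mathbf 1\{Y\geq k-1\})\bigr)}{(E+1-p_i)^2}.
\]
Hence your intended comparison (second covariance at most $\frac{1}{1-p_i}$ times the first) is \emph{equivalent} to $\operatorname{Cov}(c_p(H),\mathbf 1\{Y\geq k-1\})\geq -E\,B$. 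That inequality is the entire content of \eqref{eq:covariance-bound}.

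The pointwise bound $1/c_p(S)\le 1/(1-p_i)$ cannot deliver it. The slack between the two sides is the covariance of $\mathbf 1_F$ with the nonnegative function $\mathbf 1\{i\in S\}\bigl(\frac{1}{1-p_i}-\frac{1}{c_p(S)}\bigr)$. Its sign is exactly what is in question.

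What is missing is an actual argument for that covariance bound. The paper's argument runs as follows.
\begin{enumerate}
\item Expand $c_p(H)=\sum_{x\in G}(1-p_x)\mathbf 1\{x\in H\}$. Conditionally on $x\in H$ (resp.\ $x\notin H$), the rest of $H$ is a product-weighted sample of size $\alpha-2$ (resp.\ $\alpha-1$) on $G\setminus\{x\}$.
\item By Lemma~\ref{lem:adjacent-product-counts}, every summand with $x\in T$ is nonnegative. The loss from each outsider $x$ is at most $(1-p_x)\Pr[x\in H]\Pr[Y=k-1,\ x\notin H]$.
\item Apply the coefficient bound $(1-p_x)\Pr[x\in H]\le p_x E$. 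This is the ingredient your plan lacks. It genuinely uses feasibility, through $\sum_{r\in G\setminus J}p_r=c_p(J)+2-p_i\geq 1$ for every $J\in\binom{G}{\alpha-2}$. It turns each outsider's loss into $E\,p_x\Pr[Y=k-1,\ x\notin H]$.
\item Summing over outsiders gives exactly $E\,B$, which the positive term absorbs.
\end{enumerate}
Accordingly, the double count must pair a partial set that is one coalition member short with its completions by \emph{outsiders} $x\notin T$, i.e.\ swap $i$ for an outsider. Swapping $i$ for another coalition member leaves $|S\cap T|$ unchanged and produces no boundary term.
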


\begin{proof}
All derivatives are evaluated at the feasible vector.  We use
$\sum_{x\in N'}p_x=\alpha$ only after differentiation.

It is enough to prove the first inequality for a nonconstant threshold.  Fix
$i\in T\cap N'$, remove $i$, and draw an $(\alpha-1)$-subset $H$ of
$G=N'\setminus\{i\}$ with probability proportional to the product of its
members' odds.  Write $Y=|H\cap T|$ for its coalition count.

\paragraph{Separating outcomes according to whether $i$ is selected.}
Divide every aggregate raw weight below by the product-weight normalizer used
for $H$.  If $p_i$ is varied to $t$ while the other residues are held fixed,
then, for a possible value $D$ of $H$, the raw weight of $D\cup\{i\}$ is
\[
 \left(\frac{t}{1-t}c_p(D)+t\right)
 \prod_{x\in D}w_x.
\]
Consequently the aggregate weights of all outcomes containing $i$ and of the
threshold outcomes containing $i$ are, respectively,
\begin{equation}
 \frac{t}{1-t}\E[c_p(H)]+t
 \quad\text{and}\quad
 \frac{t}{1-t}
 \E\!\left[c_p(H)\mathbf1\{Y\geq k-1\}\right]
 +t\Pr[Y\geq k-1].
 \label{eq:containing-raw-weights}
\end{equation}

The weights of outcomes omitting $i$ do not depend on $t$.  Their values at
the feasible vector follow from
\begin{equation}
 c_p(S)\prod_{x\in S}w_x
 =\sum_{x\in S}p_x\prod_{y\in S\setminus\{x\}}w_y.
 \label{eq:insertion-identity}
\end{equation}
After writing $H=S\setminus\{x\}$, the aggregate weight of all such outcomes
is
\[
 \E\!\left[\sum_{x\in G\setminus H}p_x\right]
 =\E[c_p(H)]+1-p_i,
\]
where the equality uses feasibility.  For the threshold outcomes, the same
reindexing gives
\begin{align}
 &\E\!\left[
  \sum_{x\in G\setminus H}p_x
  \mathbf1\{Y+\mathbf1\{x\in T\}\geq k\}
 \right]
 \notag\\
 &\quad=
 \E\!\left[c_p(H)\mathbf1\{Y\geq k-1\}\right]
 +(1-p_i)\Pr[Y\geq k-1]
 -\E\!\left[
  \mathbf1\{Y=k-1\}
  \sum_{\substack{x\in G\setminus H\\x\notin T}}p_x
 \right].
 \label{eq:omitting-threshold-weight}
\end{align}
The final term removes precisely the outside completions of a partial set
that is one coalition member short.

The weights of outcomes omitting $i$ contain no $t$.  Feasibility was used
only to rewrite their constant values at the base vector in the forms above.
Adding the containing and omitting weights and then setting $t=p_i$ shows that
the total raw weight and its derivative are
\begin{equation}
 \frac{\E[c_p(H)]+1-p_i}{1-p_i}
 \quad\text{and}\quad
 \frac{\E[c_p(H)]}{(1-p_i)^2}+1,
 \label{eq:total-weight-and-derivative}
\end{equation}
whereas the threshold raw weight and its derivative are
\begin{align}
 &\frac{
   \E\!\left[c_p(H)\mathbf1\{Y\geq k-1\}\right]}
  {1-p_i}
 +\Pr[Y\geq k-1]
 -\E\!\left[
  \mathbf1\{Y=k-1\}
  \sum_{\substack{x\in G\setminus H\\x\notin T}}p_x
 \right],
 \notag\\[-0.25ex]
 &\hspace{28mm}\text{and}\qquad
 \frac{
   \E\!\left[c_p(H)\mathbf1\{Y\geq k-1\}\right]}
  {(1-p_i)^2}
 +\Pr[Y\geq k-1].
 \label{eq:threshold-weight-and-derivative}
\end{align}
The quotient rule now gives the exact derivative
\begin{align}
 &\frac{\partial}{\partial p_i}
 \Pr_p[|r^{\Samp}(p)\cap T|\geq k]=
 \notag\\
 &=
 \frac{
  \bigl(\E[c_p(H)]+(1-p_i)^2\bigr)
  \E\!\left[
   \mathbf1\{Y=k-1\}
   \sum_{\substack{x\in G\setminus H\\x\notin T}}p_x
  \right]
  +p_i\operatorname{Cov}\!\left(
   c_p(H),\mathbf1\{Y\geq k-1\}
  \right)
 }{\bigl(\E[c_p(H)]+1-p_i\bigr)^2}
 .
 \label{eq:threshold-derivative}
\end{align}
Only the covariance in this formula can be negative.

\paragraph{Accounting for the correction factor.}
Expanding $c_p(H)=\sum_{x\in H}(1-p_x)$ gives
\begin{align}
 &\operatorname{Cov}\!\left(c_p(H),\mathbf1\{Y\geq k-1\}\right)
 =
 \sum_{x\in G}(1-p_x)\Pr[x\in H]
 \left(
  \Pr[Y\geq k-1\mid x\in H]-\Pr[Y\geq k-1]
 \right).
 \label{eq:covariance-expansion}
\end{align}
Conditional on $x\notin H$, the set $H$ is a product-weighted
$(\alpha-1)$-set on $G\setminus\{x\}$. Conditional on $x\in H$, the set
$H\setminus\{x\}$ is a product-weighted $(\alpha-2)$-set on the same ground
set. These are the two adjacent sizes in
Lemma~\ref{lem:adjacent-product-counts}.  When $x\in T$, the latter sample also
comes with the fixed coalition member $x$, so its summand in
\eqref{eq:covariance-expansion} is nonnegative.  When $x\notin T$, the same
lemma says
\[
 \Pr[Y\geq k-1\mid x\in H]
 \geq
 \Pr[Y\geq k\mid x\notin H].
\]
Therefore
\begin{align}
 &\Pr[Y\geq k-1]-\Pr[Y\geq k-1\mid x\in H]
 \notag\\
 &\quad=
 \Pr[x\notin H]
 \left(
  \Pr[Y\geq k-1\mid x\notin H]
  -\Pr[Y\geq k-1\mid x\in H]
 \right)
 \notag\\
 &\quad\leq \Pr[Y=k-1,\ x\notin H].
 \label{eq:outsider-boundary-loss}
\end{align}
Thus an outside party can contribute negatively only when the partial sample
has exactly $k-1$ coalition members.

We also need the following coefficient bound, which follows from feasibility.
For every $x\in G$,
\begin{align}
 \E[c_p(H)]
 &=
 \frac{
  \displaystyle
  \sum_{J\in\binom{G}{\alpha-2}}
  \left(\prod_{z\in J}w_z\right)
  \sum_{r\in G\setminus J}p_r}
 {\displaystyle
  \sum_{D\in\binom{G}{\alpha-1}}\prod_{z\in D}w_z}
\geq
 \frac{
  \displaystyle
  \sum_{J\in\binom{G\setminus\{x\}}{\alpha-2}}
  \prod_{z\in J}w_z}
 {\displaystyle
  \sum_{D\in\binom{G}{\alpha-1}}\prod_{z\in D}w_z}
 =\frac{1-p_x}{p_x}\Pr[x\in H].
 \label{eq:coefficient-bound}
\end{align}
The first equality expands $c_p(H)$ party by party, removes that party from
$H$, and uses $(1-p_r)w_r=p_r$. For every
$J\in\binom{G}{\alpha-2}$, we have
\[
 \sum_{r\in G\setminus J}p_r=c_p(J)+2-p_i\geq1,
\]
which proves the inequality.  Equivalently,
\[
 (1-p_x)\Pr[x\in H]\leq p_x\E[c_p(H)].
\]
Thus, for each outside party, the magnitude of its negative summand in
\eqref{eq:covariance-expansion} is at most
\[
 \E[c_p(H)]p_x\Pr[Y=k-1,\ x\notin H].
\]
Summing this bound over the outside parties gives
\begin{align}
 &\operatorname{Cov}\!\left(c_p(H),\mathbf1\{Y\geq k-1\}\right)
 \geq
 -\E[c_p(H)]
 \E\!\left[
  \mathbf1\{Y=k-1\}
  \sum_{\substack{x\in G\setminus H\\x\notin T}}p_x
 \right].
 \label{eq:covariance-bound}
\end{align}

The expectation in \eqref{eq:covariance-bound} is the same boundary term that
appears positively in \eqref{eq:threshold-derivative}.  Substitution leaves
\begin{align}
 &\frac{\partial}{\partial p_i}
 \Pr_p[|r^{\Samp}(p)\cap T|\geq k]
 \geq
 \frac{1-p_i}{\E[c_p(H)]+1-p_i}
 \E\!\left[
  \mathbf1\{Y=k-1\}
  \sum_{\substack{x\in G\setminus H\\x\notin T}}p_x
 \right]
 \geq0.
 \label{eq:insider-derivative-bound}
\end{align}

Finally, let $j\in N'\setminus T$.  If $k>\alpha$, the event in the lemma is
impossible, so its derivative is zero.  If $k\leq\alpha$, apply the inequality
just proved to the complementary coalition.  Since every residual sample has
size $\alpha$,
\[
 \Pr_p[|r^{\Samp}(p)\cap T|\geq k]
 =1-
 \Pr_p[|r^{\Samp}(p)\cap(N'\setminus T)|\geq\alpha-k+1].
\]
Increasing $p_j$ makes the probability after the minus sign nondecreasing,
so the derivative of the original upper tail is nonpositive.
\end{proof}

\subsection{Proof of Theorem~\ref{thm:sampford-threshold}}

Figure~\ref{fig:path} illustrates the interval decomposition used below and
the corresponding coalition upper-tail probabilities along one representative
path.

\begin{figure}[!t]
\centering
\begin{tikzpicture}
\begin{groupplot}[
  group style={group size=1 by 2, vertical sep=0.9cm, x descriptions at=edge bottom},
  paperplot,
  width=12cm,
  xmin=0, xmax=1,
  xtick={0,0.3333,0.75,1}, xticklabels={$0$,$\tau_1=\tfrac13$,$\tau_2=\tfrac34$,$1$},
  axis line style={black!45},
  tick style={black!45},
  clip=false,
]
\nextgroupplot[
  height=4.8cm,
  ymin=0, ymax=3.4, ytick={0,1,2,3},
  ylabel={$q_i(t)$},
  xlabel={},
]
\fill[ravlight] (0,0) rectangle (0.3333,3.4);
\fill[ravlight] (0.75,0) rectangle (1,3.4);
\draw[black!14] (0,1) -- (1,1);
\draw[black!14] (0,2) -- (1,2);
\draw[black!14] (0,3) -- (1,3);
\draw[ravgray!50] (0.3333,0) -- (0.3333,3.4);
\draw[ravgray!50] (0.75,0)   -- (0.75,3.4);
\addplot[ravblue, line width=1pt] coordinates {(0,1.4) (1,2.2)} node[right,font=\scriptsize,text=ravblue] {$q_1$};
\addplot[ravblue, line width=1pt] coordinates {(0,2.7) (1,2.9)} node[right,font=\scriptsize,text=ravblue] {$q_2$};
\addplot[ravgray!80, dashed] coordinates {(0,1.2) (1,0.6)} node[above right,font=\scriptsize,text=ravgray,yshift=-2pt] {$q_3$};
\addplot[ravgray!80, dashed] coordinates {(0,0.7) (1,0.3)} node[below right,font=\scriptsize,text=ravgray,yshift=2pt] {$q_4$};
\node[font=\scriptsize, text=black!75, anchor=south] at (0.1667,3.45) {$I_1$: $\alpha_1=2$};
\node[font=\scriptsize, text=black!75, anchor=south] at (0.5417,3.45) {$I_2$: $\alpha_2=3$};
\node[font=\scriptsize, text=black!75, anchor=south] at (0.875,3.45)  {$I_3$: $\alpha_3=2$};
\nextgroupplot[
  height=4.8cm,
  ymin=0, ymax=1.05, ytick={0,0.5,1},
  ylabel={$\Pr[X_T(q(t))\ge k]$},
  xlabel={$t$},
]
\fill[ravlight] (0,0) rectangle (0.3333,1.05);
\fill[ravlight] (0.75,0) rectangle (1,1.05);
\draw[black!14] (0,0.5) -- (1,0.5);
\draw[black!14] (0,1) -- (1,1);
\draw[ravgray!50] (0.3333,0) -- (0.3333,1.05);
\draw[ravgray!50] (0.75,0)   -- (0.75,1.05);
\addplot[ravblue, line width=1pt] coordinates {(0.000,0.915) (0.017,0.922) (0.033,0.928) (0.050,0.934) (0.067,0.939) (0.083,0.945) (0.100,0.950) (0.117,0.955) (0.133,0.960) (0.150,0.964) (0.167,0.969) (0.183,0.973) (0.200,0.976) (0.217,0.980) (0.233,0.984) (0.250,0.987) (0.267,0.990) (0.283,0.993) (0.300,0.995) (0.317,0.998) (0.333,1.000) (0.350,1.000) (0.367,1.000) (0.383,1.000) (0.400,1.000) (0.417,1.000) (0.433,1.000) (0.450,1.000) (0.467,1.000) (0.483,1.000) (0.500,1.000) (0.517,1.000) (0.533,1.000) (0.550,1.000) (0.567,1.000) (0.583,1.000) (0.600,1.000) (0.617,1.000) (0.633,1.000) (0.650,1.000) (0.667,1.000) (0.683,1.000) (0.700,1.000) (0.717,1.000) (0.733,1.000) (0.750,1.000) (0.767,1.000) (0.783,1.000) (0.800,1.000) (0.817,1.000) (0.833,1.000) (0.850,1.000) (0.867,1.000) (0.883,1.000) (0.900,1.000) (0.917,1.000) (0.933,1.000) (0.950,1.000) (0.967,1.000) (0.983,1.000) (1.000,1.000)} node[right,font=\scriptsize,text=ravblue,yshift=4pt] {$k=4$};
\addplot[ravgreen, dashed, line width=1pt] coordinates {(0.000,0.185) (0.017,0.195) (0.033,0.205) (0.050,0.216) (0.067,0.227) (0.083,0.238) (0.100,0.250) (0.117,0.262) (0.133,0.274) (0.150,0.286) (0.167,0.298) (0.183,0.311) (0.200,0.324) (0.217,0.337) (0.233,0.350) (0.250,0.363) (0.267,0.377) (0.283,0.391) (0.300,0.405) (0.317,0.419) (0.333,0.433) (0.350,0.450) (0.367,0.467) (0.383,0.483) (0.400,0.500) (0.417,0.517) (0.433,0.533) (0.450,0.550) (0.467,0.567) (0.483,0.583) (0.500,0.600) (0.517,0.617) (0.533,0.633) (0.550,0.650) (0.567,0.667) (0.583,0.683) (0.600,0.700) (0.617,0.717) (0.633,0.733) (0.650,0.750) (0.667,0.767) (0.683,0.783) (0.700,0.800) (0.717,0.817) (0.733,0.833) (0.750,0.850) (0.767,0.859) (0.783,0.867) (0.800,0.875) (0.817,0.883) (0.833,0.890) (0.850,0.897) (0.867,0.904) (0.883,0.910) (0.900,0.916) (0.917,0.922) (0.933,0.927) (0.950,0.932) (0.967,0.937) (0.983,0.942) (1.000,0.946)} node[right,font=\scriptsize,text=ravgreen,yshift=-5pt] {$k=5$};
\addplot[ravorange, dotted, line width=1.2pt] coordinates {(0.000,0.000) (0.017,0.000) (0.033,0.000) (0.050,0.000) (0.067,0.000) (0.083,0.000) (0.100,0.000) (0.117,0.000) (0.133,0.000) (0.150,0.000) (0.167,0.000) (0.183,0.000) (0.200,0.000) (0.217,0.000) (0.233,0.000) (0.250,0.000) (0.267,0.000) (0.283,0.000) (0.300,0.000) (0.317,0.000) (0.333,0.000) (0.350,0.000) (0.367,0.000) (0.383,0.000) (0.400,0.000) (0.417,0.000) (0.433,0.000) (0.450,0.000) (0.467,0.000) (0.483,0.000) (0.500,0.000) (0.517,0.000) (0.533,0.000) (0.550,0.000) (0.567,0.000) (0.583,0.000) (0.600,0.000) (0.617,0.000) (0.633,0.000) (0.650,0.000) (0.667,0.000) (0.683,0.000) (0.700,0.000) (0.717,0.000) (0.733,0.000) (0.750,0.000) (0.767,0.008) (0.783,0.016) (0.800,0.025) (0.817,0.034) (0.833,0.043) (0.850,0.053) (0.867,0.063) (0.883,0.073) (0.900,0.084) (0.917,0.095) (0.933,0.106) (0.950,0.118) (0.967,0.129) (0.983,0.141) (1.000,0.154)} node[right,font=\scriptsize,text=ravorange] {$k=6$};
\end{groupplot}
\end{tikzpicture}
\caption{The straight quota path $q(t)=(1-t)q+tq'$ for four parties,
$h=6$, coalition $T=\{1,2\}$, $q=(1.4,2.7,1.2,0.7)$ and
$q'=(2.2,2.9,0.6,0.3)$; coalition quotas are solid, outside quotas dashed. Top: the two interior breakpoints
$\tau_1$ and $\tau_2$ are the times at which a quota crosses an integer;
between them the lower quotas, the set of fractional parties and the number
$\alpha_\ell$ of residual seats are fixed. Bottom: the Sampford upper-tail
probabilities $F_k(t)=\Pr[X_T(q(t))\geq k]$, computed directly from
\eqref{eq:sampford-pmf}, for three thresholds. Each $F_k$ is nondecreasing on
every open interval by Lemma~\ref{lem:coordinate-tail-signs} and continuous at the breakpoints, which
together give $F_k(1)\ge F_k(0)$.}
\label{fig:path}
\end{figure}

We divide the straight quota path at its integer crossings.  On each resulting
open interval, the set of parties with fractional quotas is fixed, so the
derivative calculation applies there.  We then prove continuity of the
complete allocation at the crossings and join the interval comparisons.  No
derivative at a crossing is needed.

\begin{proof}[Proof of Theorem~\ref{thm:sampford-threshold}]
The statement is immediate when $T=\varnothing$ or $T=N$, so assume that both
$T$ and its complement are nonempty.  Let $q,q'$ satisfy
\eqref{eq:coalition-reinforcement}, set
\[
 q(t)=(1-t)q+tq'\qquad(0\leq t\leq1),
\]
and fix a complete-seat threshold $k\in\Nzero$.  Write
\[
 F_k(t)=\Pr[X_T(q(t))\geq k].
\]
We will prove that $F_k(1)\geq F_k(0)$.

Let
\[
 0=\tau_0<\tau_1<\cdots<\tau_M=1
\]
consist of the endpoints together with all distinct interior times at which a
nonconstant coordinate of $q(t)$ is an integer.  On
\[
 I_\ell=(\tau_{\ell-1},\tau_\ell),
\]
the lower quotas are fixed.  The set
\[
 N_\ell=\{i\in N:q_i(t)\notin\mathbb Z\},\qquad t\in I_\ell,
\]
is also fixed, and
\[
 p(t)=\bigl(q_i(t)-\lfloor q_i(t)\rfloor\bigr)_{i\in N_\ell}
 \in(0,1)^{N_\ell}.
\]
Parties outside $N_\ell$ have constant integer quotas on this interval and
receive only their deterministic lower-quota seats.  Moreover,
\[
 \alpha_\ell=\sum_{i\in N_\ell}p_i(t)
\]
is a fixed integer on $I_\ell$.

We first show that $F_k$ is nondecreasing on $I_\ell$.  The residual threshold
\[
 \kappa_\ell=k-\sum_{i\in T}\lfloor q_i(t)\rfloor
\]
is fixed there.  If $\kappa_\ell\leq0$ or $\kappa_\ell>\alpha_\ell$, the
corresponding upper tail is constant.  In particular, this covers every case
when $\alpha_\ell=0$.  If $\alpha_\ell=1$, the only nonconstant case is
$\kappa_\ell=1$, when
\[
 F_k(t)=\sum_{i\in T\cap N_\ell}p_i(t).
\]
This is nondecreasing because each quota in $T$ is nondecreasing.

Suppose now that $\alpha_\ell\geq2$.  The normalized raw-weight formula is
smooth throughout $(0,1)^{N_\ell}$.  The chain rule and
Lemma~\ref{lem:coordinate-tail-signs}, applied on the ground set $N_\ell$,
give
\begin{align*}
 F_k'(t)
 &=
 \sum_{i\in T\cap N_\ell}\dot p_i(t)
 \frac{\partial}{\partial p_i}
 \Pr[|r^{\Samp}(p(t))\cap T|\geq\kappa_\ell]
 +
 \sum_{j\in N_\ell\setminus T}\dot p_j(t)
 \frac{\partial}{\partial p_j}
 \Pr[|r^{\Samp}(p(t))\cap T|\geq\kappa_\ell]
 \geq0.
\end{align*}
Indeed, residues inside $T$ are nondecreasing, residues outside $T$ are
nonincreasing, and the two partial derivatives have the corresponding signs.
Thus $F_k$ is nondecreasing on every open interval $I_\ell$.

It remains to join these comparisons at the breakpoints.  Approach one such
breakpoint from an adjacent interval $I_\ell$, whose fixed set of fractional
parties is $N_\ell$.  If the interior residue vector converges to
$\bar p\in[0,1]^{N_\ell}$, then
\begin{equation}
 \begin{aligned}
 r^{\Samp}(p(t))
 &\xrightarrow{d}
 \{x\in N_\ell:\bar p_x=1\}\cup
 r^{\Samp}\bigl((\bar p_x)_{\,x\in N_\ell:\,0<\bar p_x<1}\bigr).
 \end{aligned}
 \label{eq:boundary-face-identity}
\end{equation}
Here $\bar p$ is only a one-sided limiting vector; an actual residue vector
still lies in $[0,1)^n$. 
If $\bar p_i=0$, then party $i$'s inclusion probability tends to zero, so
party $i$ is absent from the limiting sample.

When fractional coordinates remain in the limit,
\eqref{eq:boundary-face-identity} follows from
\eqref{eq:sampford-pmf}: every set of positive limiting weight contains all
unit coordinates and no zero coordinate, and its remaining factor is exactly
the Sampford weight on the fractional coordinates.  Their residues sum to
\[
 \alpha_\ell-\bigl|\{x\in N_\ell:\bar p_x=1\}\bigr|,
\]
an integer strictly between zero and their number, so the reduced Sampford sampling
has a positive normalizer.  If $\bar p$ is a zero--one vector, the prescribed
marginals give, for every interior feasible sequence $p^{(m)}\to\bar p$,
\[
 \Pr\!\left[
  r^{\Samp}(p^{(m)})\ne\{x\in N_\ell:\bar p_x=1\}
 \right]
 \leq
 \sum_{\substack{x\in N_\ell\\\bar p_x=1}}(1-p_x^{(m)})
 +
 \sum_{\substack{x\in N_\ell\\\bar p_x=0}}p_x^{(m)}
 \longrightarrow0.
\]

A quota whose residue tends to one approaches the breakpoint integer from
below.  Its lower quota on the interval is one less than at the breakpoint,
while its residual seat becomes certain.  A quota whose residue tends to zero
approaches the breakpoint integer from above, has the same lower quota as at
the breakpoint, and is omitted in the limit.
For the coordinates that remain fractional, the limiting distribution of the selected subset is precisely the distribution obtained by applying the Sampford method to their residues at the breakpoint. These limits hold jointly for all coordinates crossing an integer. At every interior breakpoint, the complete allocation law therefore converges from both sides to the law at the breakpoint; at each endpoint it converges from the one adjacent interval. Hence $F_k(t)$ is continuous on $[0,1]$.

Continuity extends each open-interval comparison to its endpoints.
Concatenating them yields $F_k(1)\geq F_k(0)$.  Since this holds for every
$k\in\Nzero$, we obtain
\[
 X_T(q')\succeq_{\SD}X_T(q),
\]
as required.
\end{proof}

The two extreme thresholds have useful interpretations for the residual
lottery.

\begin{corollary}[Extreme threshold events]
\label{cor:endpoint-events}
Let $p,p'\in[0,1)^n$ have the same integer coordinate sum, and let
$T\subseteq N$.  If
\[
 p'_i\geq p_i\quad(i\in T),
 \qquad
 p'_j\leq p_j\quad(j\notin T),
\]
then
\[
 \Pr[T\subseteq r^{\Samp}(p')]
 \geq
 \Pr[T\subseteq r^{\Samp}(p)],
 \qquad
 \Pr[r^{\Samp}(p')\subseteq T]
 \geq
 \Pr[r^{\Samp}(p)\subseteq T].
\]
\end{corollary}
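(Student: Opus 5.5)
We derive the corollary from Theorem~\ref{thm:sampford-threshold} by realizing $p$ and $p'$ as standard-quota vectors and choosing the appropriate coalition and threshold. Let $\alpha=\sum_i p_i=\sum_i p'_i$ and take $h=\alpha$. If $\alpha=0$, then $p=p'=0$ and both sides are equal, so there is nothing to prove. Otherwise, use $q=p$ and $q'=p'$ as quota vectors. These arise from vote vectors $v=p$ and $v'=p'$ with house size $h$, since $hv_i/\sum_j v_j=p_i$. All lower quotas are then zero, so $X_T(q)=|r^{\Samp}(p)\cap T|$, and likewise for $p'$. The hypothesis is exactly the reinforcement condition \eqref{eq:coalition-reinforcement} for the coalition $T$. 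The theorem therefore gives $|r^{\Samp}(p')\cap T|\succeq_{\SD}|r^{\Samp}(p)\cap T|$.

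\textbf{First inequality.} We take the threshold $t=|T|$. The event $|r\cap T|\ge|T|$ is precisely $T\subseteq r$, so stochastic dominance gives $\Pr[T\subseteq r^{\Samp}(p')]\ge\Pr[T\subseteq r^{\Samp}(p)]$. If $|T|>\alpha$, both probabilities are zero and the inequality holds trivially.

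\textbf{Second inequality.} Every sample has exactly $\alpha$ members, so $r\subseteq T$ holds exactly when $|r\cap T|\ge\alpha$. Applying the dominance at threshold $t=\alpha$ gives the claim. If $\alpha>|T|$, the event is impossible on both sides.

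\textbf{Main point to check.} The only step needing care is that every residue vector $p\in[0,1)^n$ with integer sum is genuinely realized by some vote vector and house size. The choice $v=p$, $h=\alpha$ above does this whenever $\alpha\ge1$. Beyond that, no additional analysis is needed: boundary cases where residues hit $0$ are already covered by the theorem's handling of integer crossings.
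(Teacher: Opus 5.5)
Your proposal is correct and matches the paper's proof. Both realize $p,p'$ as quota vectors via $v=p$, $v'=p'$, $h=\alpha$, so all lower quotas vanish and $X_T=|r^{\Samp}(\cdot)\cap T|$; both then invoke Theorem~\ref{thm:sampford-threshold}, read off the thresholds $|T|$ and $\alpha$, and treat the degenerate case $\alpha=0$ separately.
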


\begin{proof}
If the common coordinate sum is zero, both selected sets are empty and the
claim is immediate.  Otherwise, let the common sum be $\alpha$ and apply
Theorem~\ref{thm:sampford-threshold} to the quota vectors $q=p$ and $q'=p'$.
They are realized by house size $h=\alpha$ and vote vectors $v=p$ and $v'=p'$.
The theorem therefore orders the two coalition counts in stochastic
dominance.  If $|T|\leq\alpha$, selecting every member of $T$ is their
upper-tail event at threshold $|T|$; if $|T|>\alpha$, that event is impossible.
If $|T|\geq\alpha$, selecting no outsider is their upper-tail event at
threshold $\alpha$; if $|T|<\alpha$, that event is impossible.
\end{proof}

The first inequality says that the probability of selecting every coalition
member cannot fall; the second says the same about selecting no outsider.
These are \textit{joint-inclusion monotonicity} and \textit{joint-exclusion monotonicity}, respectively.
When $|T|$ equals the residual sample size, both events say that the selected
set is exactly $T$.  Thus either inequality recovers Sampford's known
selection-monotonicity guarantee, while Theorem~\ref{thm:sampford-threshold}
also controls every intermediate coalition count.

\section{Limits of pairwise threshold monotonicity}
\label{sec:pairwise-impossibility}

Theorem~\ref{thm:sampford-threshold} concerns one reinforced coalition and
requires every party outside it to weaken. Pairwise threshold monotonicity
instead tests one gaining and one losing coalition while allowing all remaining
quotas to move in either direction; it requires at least one of the two
coalition distributions to move as prescribed. Even when the tested
coalitions must be disjoint, this requirement is incompatible with quota and
ex-ante proportionality from seven parties onward. The impossibility applies
to every such apportionment method, not only Sampford, and follows from support
constraints in a two-seat problem.

\begin{samepage}
\begin{definition}[Disjoint pairwise threshold monotonicity]
\label{def:disjoint-pairwise-threshold}
Let $v,v'\in\R_{\geq 0}^n$ be two vote vectors, each with positive total, let
$h\in\N$, and let $q,q'$ be their standard-quota vectors. An apportionment
method satisfies
\emph{pairwise threshold monotonicity for disjoint coalitions} if, for every
pair of disjoint coalitions $A,B\subseteq N$ satisfying
\begin{equation}
 q'_i\geq q_i\quad(i\in A),
 \qquad
 q'_j\leq q_j\quad(j\in B),
 \label{eq:disjoint-pairwise-changes}
\end{equation}
at least one of
\begin{equation}
 X_A(v',h)\succeq_{\SD}X_A(v,h)
 \qquad\text{or}\qquad
 X_B(v,h)\succeq_{\SD}X_B(v',h)
 \label{eq:disjoint-pairwise-branches}
\end{equation}
holds. The same comparison in \eqref{eq:disjoint-pairwise-branches} must hold
at every threshold; the selected branch cannot depend on the threshold.
\end{definition}
\end{samepage}

The word ``pairwise'' refers to a pair of coalitions, not to pairwise
correlation between parties. Definition~\ref{def:disjoint-pairwise-threshold}
is weaker than the unrestricted pairwise axiom of \citet{CGST+26a}, because it
tests only disjoint coalitions. It nevertheless contains ordinary threshold
monotonicity: take $B=N\setminus A$. Since $X_B=h-X_A$, the two alternatives
in \eqref{eq:disjoint-pairwise-branches} are then equivalent.

An apportionment method $\mathcal A$ has \emph{full support} if, for every
instance $(v,h)$ with $\sum_i v_i>0$ and every apportionment $x$ that gives
each party either its lower or upper quota, $\Pr[\mathcal A(v,h)=x]>0$.
Theorem~\ref{thm:pairwise-impossibility} does not require full support.

\begin{theorem}[Pairwise impossibility with or without full support]
\label{thm:pairwise-impossibility}
For every $n\geq7$, no randomized apportionment method on all strictly
positive vote profiles simultaneously satisfies quota, ex-ante
proportionality, and pairwise threshold monotonicity for disjoint coalitions.
The contradiction already uses house size two, quota vectors strictly between
zero and one, and a comparison in which every member of the first coalition
strictly gains while every member of the second strictly loses.
\end{theorem}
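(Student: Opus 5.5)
The plan is to work entirely with house size $h=2$ and quota vectors in $(0,1)^n$ summing to $2$. Then $b_i=0$ for every party, exactly two parties are rounded up, and by quota and ex-ante proportionality any method at a profile $q$ is a probability distribution $y=(y_e)$ on the edges $e\in\binom{N}{2}$ with $\sum_{e\ni i}y_e=q_i$ for every $i$. So $y$ is a fractional perfect "$b$-matching'' with prescribed vertex degrees.

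For a two-party coalition $\{a,b\}$ the seat-total distribution is determined by the tails
\[
\Pr[X_{\{a,b\}}\ge1]=q_a+q_b-y_{ab}\qquad\text{and}\qquad \Pr[X_{\{a,b\}}\ge2]=y_{ab}.
\]
Hence stochastic dominance for such a coalition compares $y_{ab}$, and the union mass $q_a+q_b-y_{ab}$, across the two profiles. Singleton coalitions are automatically fine, so the disjoint pairwise axiom applied to two-element coalitions $A$ and $B$ reduces to linear conditions on the edge weights of the two profiles.

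\textbf{Step 1: fix a base profile.} I would fix a profile $q^\ast$ with a few ``large'' parties (quota near $1/2$ or more) and at least four ``small'' parties of equal quota. I would choose the numbers so that degree feasibility forces positive weight on some edge among the small parties. For example, the small parties' total degree should exceed what edges to the large parties can absorb, so the small parties must be matched partly among themselves.

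\textbf{Step 2: the key lemma.} The key lemma will be: if two disjoint small--small edges $\{a,b\}$ and $\{c,d\}$ both have positive weight at $q^\ast$, then pairwise threshold monotonicity is violated. To prove it I would construct a perturbed profile $q'$ with $A=\{a,b\}$ strictly gaining and $B=\{c,d\}$ strictly losing. The free parties, which exist because $n\ge7$ leaves at least three unconstrained parties, would be moved so that the feasible region at $q'$ forces $y'_{ab}<y^\ast_{ab}$, breaking the $A$-branch at threshold $2$. At the same time it should force $q'_c+q'_d-y'_{cd}>q^\ast_c+q^\ast_d-y^\ast_{cd}$ (or $y'_{cd}>y^\ast_{cd}$), breaking the $B$-branch at some threshold.

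Since the method at $q'$ is unknown, I would not rely on a single pair $(q^\ast,q')$. Instead I would apply the axiom symmetrically, also to the pair $(q',q^\ast)$ with the roles of $A$ and $B$ exchanged, and to relabelings of the small parties. I would then add the resulting linear inequalities so that the unknown weights $y'$ cancel against the degree equations at $q'$. If this symmetrization fails, the fallback is a chain of intermediate profiles $q^\ast\to q'\to q''$, each step testing a different disjoint pair.

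\textbf{Step 3: conclude.} By Step 2, the support of $y^\ast$ restricted to small--small edges is a pairwise intersecting family of edges, so it is a star or a triangle. A star centred at one small party, or a triangle on three small parties, cannot carry the small--small degree mass demanded in Step 1 while every small party keeps its equal quota. Concretely, a fourth small party would get no small--small weight, contradicting the forced degree, or the centre's degree would exceed its quota $<1$. This gives the contradiction.

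No step uses full support: only the marginal equations and supports forced by the inequalities are used. Note also that every gaining party strictly gains and every losing party strictly loses, matching the last sentence of Theorem~\ref{thm:pairwise-impossibility}.

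The main obstacle is Step 2. The axiom only compares two profiles, and the lottery at $q'$ is arbitrary subject to its marginals. I must therefore choose $q'$, and the parameters of $q^\ast$, so that every feasible $y'$ breaks both branches, using that either branch's failure can be witnessed at a different threshold. I expect the precise numerical choice of the seven-party profile to be where the requirement $n\ge7$ enters.
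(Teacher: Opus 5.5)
Your architecture matches the paper's: house size two, the lottery as an edge weighting with prescribed weighted degrees, the tail identities for pairs, a lemma forbidding two disjoint positive small--small edges, and the star-or-triangle classification. The gap is Step~2. You leave it open, and the fallbacks you propose cannot close it. Applying the axiom to $(q',q^\ast)$ with $A$ and $B$ exchanged yields literally the same disjunction, namely $X_A(q')\succeq_{\SD}X_A(q^\ast)$ or $X_B(q^\ast)\succeq_{\SD}X_B(q')$. Moreover, every application of the axiom gives only a disjunction of two dominance statements, so there are no linear inequalities to add so that $y'$ cancels.

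The missing idea is to make the unknown lottery at $q'$ irrelevant by pushing a party outside $A\cup B$ almost to certainty. The paper takes $q^\ast=(3/5,\beta,\dots,\beta)$ with $\beta=7/(5(n-1))$. It lets $q'$ give party~$1$ quota $1-\varepsilon$, members of $A$ quota $\beta+\delta$, members of $B$ quota $\beta-\delta$, and the other $n-5$ small parties a common quota $\rho$ that restores the sum two. Every edge avoiding party~$1$ then has weight at most $\Pr_{q'}[1\notin S_{q'}]=\varepsilon$ under \emph{every} quota-respecting proportional method, so $y'_{ab},y'_{cd}\le\varepsilon$. Choose $\varepsilon<y^\ast_{ab}$ and $2\delta+\varepsilon<y^\ast_{cd}$. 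Then the $A$-branch fails at threshold two, and the $B$-branch fails at threshold one because $2\beta-2\delta-y'_{cd}\ge2\beta-2\delta-\varepsilon>2\beta-y^\ast_{cd}$. A single comparison suffices, and this is exactly where positivity of both disjoint edges is used.

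Step~3 also needs repair. With large parties present, nothing forces an individual small party to carry small--small weight, so ``a fourth small party would get no small--small weight'' is not a contradiction. What is forced is the total small--small mass $W$; here $W=1-q^\ast_1=2/5$. A star needs its centre's quota to be at least $W$, and a triangle needs its three quotas to sum to at least $2W$. So you need $W>\beta$ and $2W>3\beta$. Together with keeping $\rho\in(0,1)$ for arbitrarily small $\varepsilon$, these inequalities are where $n\ge7$ enters for the paper's numbers.
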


Set $h=2$ and let $q\in(0,1)^n$ satisfy $\sum_iq_i=2$. Quota then makes every
outcome a two-party set $S_q\in\binom{N}{2}$, and ex-ante proportionality
becomes
\begin{equation}
 \Pr_q[i\in S_q]=q_i\qquad(i\in N).
 \label{eq:pairwise-prescribed-marginals}
\end{equation}
Here and below, the subscript $q$ means that the method is evaluated at the
specific instance $(v,h)=(q,2)$; no homogeneity or quota-only assumption is
being made. For a pair $C=\{i,j\}$, since
$X_C=|S_q\cap C|\in\{0,1,2\}$,
\begin{equation}
 \Pr_q[X_C\geq2]=\Pr_q[S_q=C],
 \qquad
 \Pr_q[X_C\geq1]=q(C)-\Pr_q[S_q=C],
 \label{eq:pairwise-two-thresholds}
\end{equation}
where $q(C)=\sum_{i\in C}q_i$. Thus, when a pair's mean is fixed, an increase
in its joint-selection probability improves its threshold-two probability but
worsens its threshold-one probability.

This two-seat problem has a natural weighted-graph representation. The
parties are vertices, and each possible allocation $C\in\binom{N}{2}$ is an
edge with weight $\Pr_q[S_q=C]$. Equation
\eqref{eq:pairwise-prescribed-marginals} says that the total weight of the
edges incident to vertex $i$ is exactly $q_i$. After zero-weight edges are
removed, the remaining edges form the \emph{support graph}. We call an edge
family \emph{intersecting} when every two of its edges share a vertex.

The contradiction is encoded by the support graph on the small parties.
Pairwise threshold monotonicity forces this graph to be intersecting, but every
intersecting edge family lies in a star or a triangle, and neither form can
carry the prescribed weighted degrees. To force intersection, make party 1
almost certain. Jointly selecting either tested pair then becomes almost
impossible: the growing pair fails at threshold two, while the shrinking pair
fails in the opposite direction at threshold one.

\begin{samepage}
\begin{lemma}[Two disjoint small-party pairs cannot both occur]
\label{lem:two-disjoint-positive-pairs}
Suppose a quota and ex-ante proportional method satisfies
Definition~\ref{def:disjoint-pairwise-threshold}. Fix $n\geq7$ and consider
the two-seat quota profile
\begin{equation}
 p=\left(\frac35,
          \frac{7}{5(n-1)},\ldots,\frac{7}{5(n-1)}\right).
 \label{eq:pairwise-core-profile}
\end{equation}
Then the support graph induced by $\{2,\ldots,n\}$ is intersecting: there
cannot be two disjoint pairs $A,B\subseteq\{2,\ldots,n\}$ with
$\Pr_p[S_p=A]>0$ and $\Pr_p[S_p=B]>0$.
\end{lemma}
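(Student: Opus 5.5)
The plan is to argue by contradiction. Suppose $A=\{a,a'\}$ and $B=\{b,b'\}$ are disjoint pairs in $\{2,\ldots,n\}$ with $\pi_A=\Pr_p[S_p=A]>0$ and $\pi_B=\Pr_p[S_p=B]>0$. I will build a second two-seat profile $q'$ in which every member of $A$ strictly gains and every member of $B$ strictly loses. At $q'$, party $1$ will be pushed almost to quota one. By \eqref{eq:pairwise-prescribed-marginals}, any two-party outcome avoiding party $1$ then has probability at most $1-q'_1$. This bound holds for every quota and ex-ante proportional method, whatever it does at $q'$. The comparison will be realized with $h=2$ and vote vectors $v=p$, $v'=q'$, both strictly positive.

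\textbf{Construction of $q'$.} Fix small $\eta,\delta>0$ and set
\[
q'_1=1-\eta,\qquad q'_a=q'_{a'}=p_a+\delta,\qquad q'_b=q'_{b'}=p_b-\delta .
\]
The changes to $A$ and $B$ cancel in total. Party $1$ gains $\tfrac25-\eta$, which is absorbed by lowering the remaining $n-5$ small parties equally. Their total quota is $\tfrac{7(n-5)}{5(n-1)}$, and this strictly exceeds $\tfrac25$ exactly when $5n>33$, that is, when $n\geq7$. Hence every remaining party keeps a strictly positive quota, and all coordinates of $q'$ lie in $(0,1)$ with sum $2$. I expect this mass-balancing step to be the crux, since it is precisely where the hypothesis $n\geq7$ enters. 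Parties outside $A\cup B$ are unrestricted by Definition~\ref{def:disjoint-pairwise-threshold}, so it does not matter that party $1$ rises while the others fall.

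\textbf{The first branch fails at threshold two.} By \eqref{eq:pairwise-two-thresholds},
\[
\Pr_{q'}[X_A\geq2]=\Pr_{q'}[S_{q'}=A]\leq\Pr_{q'}[1\notin S_{q'}]=\eta .
\]
If we choose $\eta<\pi_A=\Pr_p[X_A\geq2]$, then $X_A(v',2)\succeq_{\SD}X_A(v,2)$ fails.

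\textbf{The second branch fails at threshold one.} By \eqref{eq:pairwise-two-thresholds}, $\Pr[X_B\geq1]=q(B)-\Pr[S=B]$ at each profile. Since $q'(B)=p(B)-2\delta$,
\[
\Pr_{q'}[X_B\geq1]-\Pr_p[X_B\geq1]=\pi_B-2\delta-\Pr_{q'}[S_{q'}=B]\geq\pi_B-2\delta-\eta .
\]
This difference is strictly positive once $\eta+2\delta<\pi_B$, and then $X_B(v,2)\succeq_{\SD}X_B(v',2)$ fails.

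Finally, we choose $\eta<\min(\pi_A,\pi_B)/2$ and $0<\delta<\min\{(\pi_B-\eta)/2,\;p_b\}$. With these choices both alternatives in \eqref{eq:disjoint-pairwise-branches} fail, contradicting Definition~\ref{def:disjoint-pairwise-threshold}.
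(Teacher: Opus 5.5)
Your proof is correct and follows essentially the paper's route: push party~1 to $1-\eta$, shift $A$ up and $B$ down by $\delta$, and spread the deficit evenly over the other $n-5$ small parties, which is where $n\geq7$ enters. Then refute the first branch at threshold two and the second at threshold one using $\Pr_{q'}[S_{q'}=C]\le 1-q'_1=\eta$ for $C\in\{A,B\}$. The one check you leave implicit, $p_a+\delta<1$, is automatic because $\delta<\pi_B/2\le p_b/2$.
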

\end{samepage}

\begin{proof}
Put $m=n-1$ and $b=7/(5m)$, so that the profile in
\eqref{eq:pairwise-core-profile} is $(3/5,b,\ldots,b)$.
\begin{samepage}
Suppose otherwise. Choose
\[
 0<\varepsilon<\min\!\left\{
   \Pr_p[S_p=A],\frac{\Pr_p[S_p=B]}2,\frac12
 \right\},
\]
and then choose
\[
 0<\delta<\min\!\left\{
   b,1-b,\frac{\Pr_p[S_p=B]-\varepsilon}{2}
 \right\}.
\]
\end{samepage}
Let $R=\{2,\ldots,n\}\setminus(A\cup B)$ and put
\[
 \rho=\frac{1+\varepsilon-4b}{m-4}.
\]
Because $m\geq6$, we have $1-4b>0$ and $m-4\geq2$; the choice
$\varepsilon<1/2$ therefore gives $0<\rho<1$. The following table displays
the comparison between $p$ and the perturbed quota vector $p'$:
\begin{center}
\begin{tabular}{c@{\qquad}c@{\qquad}c}
\hline
party or block & quota under $p$ & quota under $p'$ \\
\hline
$1$ & $3/5$ & $1-\varepsilon$ \\
each $i\in A$ & $b$ & $b+\delta$ \\
each $i\in B$ & $b$ & $b-\delta$ \\
each $i\in R$ & $b$ & $\rho$ \\
\hline
\end{tabular}
\end{center}
All entries of $p'$ are strictly between zero and one and sum to two. Every
member of $A$ strictly gains and every member of $B$ strictly loses.

At the new profile, selecting either pair $A$ or pair $B$ requires omitting
party 1. Exact proportionality therefore gives
\begin{equation}
 \Pr_{p'}[S_{p'}=A],\ \Pr_{p'}[S_{p'}=B]
 \leq \Pr_{p'}[1\notin S_{p'}]=\varepsilon.
 \label{eq:pairwise-pair-upper-bound}
\end{equation}
The first branch of \eqref{eq:disjoint-pairwise-branches} fails at threshold
two because
\[
 \Pr_{p'}[X_A\geq2]=\Pr_{p'}[S_{p'}=A]
 \leq\varepsilon<\Pr_p[S_p=A]=\Pr_p[X_A\geq2].
\]
The second branch would require the old seat total of $B$ to dominate the new
one. But \eqref{eq:pairwise-two-thresholds} and
\eqref{eq:pairwise-pair-upper-bound} give
\begin{align*}
 \Pr_{p'}[X_B\geq1]
 &=2b-2\delta-\Pr_{p'}[S_{p'}=B]\\
 &\geq2b-2\delta-\varepsilon\\
 &>2b-\Pr_p[S_p=B]\\
 &=\Pr_p[X_B\geq1].
\end{align*}
Thus the second branch fails at threshold one, a contradiction.
\end{proof}

\begin{figure}[!t]
\centering
\begin{tikzpicture}[
  every node/.style={font=\footnotesize},
  v/.style={circle, draw=black!70, fill=white, inner sep=0pt, minimum size=5.5pt},
  big/.style={circle, draw=ravblue!50!black, fill=ravblue, inner sep=0pt, minimum size=8.5pt},
  hi/.style={circle, draw=black!70, fill=ravblue!18, inner sep=0pt, minimum size=5.5pt},
  e/.style={ravorange, line width=1.1pt},
  se/.style={ravorange, line width=0.9pt},
  possible/.style={ravgray!60, densely dashed, line width=0.5pt},
  lbl/.style={font=\scriptsize, text=ravgray},
  card/.style={fill=ravlight, draw=ravblue!15, line width=0.4pt, rounded corners=5pt},
  sub/.style={font=\scriptsize, text=black!80},
]
\begin{scope}
  \node[big, label={[text=ravblue!50!black]above:$1$}] (p1) at (1.25,1.8) {};
  \foreach \i/\x in {2/0,3/0.5,4/1,5/1.5,6/2,7/2.5}
     \node[v] (v\i) at (\x,0) {};
  \foreach \i in {2,...,7} \draw[possible] (p1) -- (v\i);
  \draw[e] (v2) -- (v3) node[midway, below=1pt, text=ravorange!85!black] {$A$};
  \draw[e] (v5) -- (v6) node[midway, below=1pt, text=ravorange!85!black] {$B$};
  \draw[decorate, decoration={brace, mirror, amplitude=3pt}, ravgray]
     (-0.15,-0.65) -- (2.65,-0.65) node[midway, below=4pt, text=black!80] {$L=\{2,\dots,n\}$};
  \node[lbl, anchor=west] at (2.9,1.8) {$q_1=\tfrac35$};
  \node[lbl, anchor=west] at (2.9,1.35) {possible edges at $1$: total $\tfrac35$};
  \node[lbl, anchor=west] at (2.9,0.0) {$q_i=\tfrac{7}{5(n-1)}$, $i\in L$};
  \node[lbl, anchor=west] at (2.9,-0.45) {edges inside $L$: total $\tfrac25$};
  \node[sub] at (3.525,-1.72) {(a) forbidden: disjoint positive edges $A,B$ inside $L$};
\end{scope}
\begin{scope}[yshift=-4.4cm]
  \node[hi, label={below:$c$}] (c) at (1.25,0) {};
  \foreach \i/\a in {2/150,3/110,4/70,5/30,6/-10}
     \node[v] (s\i) at ($(c)+(\a:1.15)$) {};
  \foreach \i in {2,...,6} \draw[se] (c) -- (s\i);
  \node[sub] at (1.275,-1.22) {(b) star:};
  \node[sub] at (1.275,-1.62) {$q_c\ge\tfrac25>\tfrac{7}{5(n-1)}$};
\end{scope}
\begin{scope}[xshift=4.5cm, yshift=-5.2cm]
  \node[hi, label={left:$a$}]  (a) at (0.6,0.2) {};
  \node[hi, label={right:$b$}] (b) at (1.9,0.2) {};
  \node[hi, label={above:$c$}] (c) at (1.25,1.25) {};
  \draw[se] (a) -- (b) -- (c) -- (a);
  \foreach \x/\y in {0/1.25,2.5/1.25,1.25/1.9}
     \node[v] at (\x,\y) {};
  \node[sub] at (1.275,-0.42) {(c) triangle:};
  \node[sub] at (1.275,-0.82) {$q_a+q_b+q_c\ge\tfrac45>\tfrac{21}{5(n-1)}$};
\end{scope}
\begin{scope}[on background layer]
  \draw[card] (-0.45,-1.30) rectangle (7.50,2.45);
  \draw[card] (-0.45,-5.35) rectangle (3.00,-2.95);
  \draw[card] (4.05,-5.35) rectangle (7.50,-2.95);
\end{scope}
\end{tikzpicture}
\caption{The two-seat support graph for $n=7$, where edge weights are
allocation probabilities and weighted degrees are quotas. By
Lemma~\ref{lem:two-disjoint-positive-pairs}, the positive edges inside
$L$ are intersecting and hence, by
Lemma~\ref{lem:intersecting-edge-families}, form a star or lie in a
triangle. Their total weight is $2/5$, so a star forces one marginal to
be at least $2/5$, while a triangle forces three marginals to sum to at
least $4/5$. Both contradict the prescribed small-party quotas. Dashed
lines denote possible edges incident to party~$1$.}
\label{fig:support}
\end{figure}

Lemma~\ref{lem:two-disjoint-positive-pairs} makes the small-party support graph
intersecting. Figure~\ref{fig:support} summarizes the forbidden configuration
and the two possible shapes of an intersecting edge family. The next elementary
lemma gives the classification.

\begin{lemma}[Intersecting edge families]
\label{lem:intersecting-edge-families}
If every two edges in a simple graph intersect, then all of its edges are
incident to one common vertex or all of its edges lie in one triangle.
\end{lemma}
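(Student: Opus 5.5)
The argument is an elementary case analysis. If the graph has at most one edge, the conclusion holds trivially, since a single edge is incident to either endpoint. Otherwise, fix two distinct edges $e_1=\{a,b\}$ and $e_2$. Because they intersect and are distinct edges of a simple graph, they share exactly one vertex. Label so that $e_2=\{a,c\}$ with $c\neq b$.

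We then sort every other edge $f$ by its relation to $e_1$ and $e_2$. The edge $f$ must meet both $\{a,b\}$ and $\{a,c\}$. There are only two ways this can happen:
\begin{itemize}
\item[(i)] $f$ contains $a$;
\item[(ii)] $f$ avoids $a$, in which case it must contain both $b$ and $c$, so $f=\{b,c\}$.
\end{itemize}

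If no edge is of type (ii), every edge contains $a$, and all edges are incident to the common vertex $a$.

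Suppose instead that the edge $\{b,c\}$ is present. Take any edge $f$ containing $a$ and write $f=\{a,d\}$. Since $f$ must meet $\{b,c\}$, we get $d\in\{b,c\}$. So every edge through $a$ is $\{a,b\}$ or $\{a,c\}$, and the only edge avoiding $a$ is $\{b,c\}$. Hence all edges lie in the triangle on $\{a,b,c\}$.

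The only point requiring care is the step where $\{b,c\}$ is present. There, the pairwise intersection condition must be applied against $\{b,c\}$ itself, not only against $e_1$ and $e_2$, to exclude edges $\{a,d\}$ with $d\notin\{b,c\}$. The rest is bookkeeping.
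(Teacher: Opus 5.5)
Your proof is correct and follows essentially the same approach as the paper: fix two edges $ab$ and $ac$, note that any edge avoiding $a$ must be $bc$, and when $bc$ is present use intersection with $bc$ to confine every edge to the triangle on $\{a,b,c\}$. You make explicit the step the paper states briefly, that edges through $a$ must also meet $bc$.
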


\begin{proof}
The claim is immediate if the graph has at most one edge. Otherwise, take two
distinct intersecting edges and label them $ab$ and $ac$. Every edge not
incident to $a$ must intersect both of them and hence must be $bc$. If $bc$
is absent, every edge uses $a$. If $bc$ is present, any further edge must meet
all three of $ab$, $ac$, and $bc$, so it lies in the triangle on
$\{a,b,c\}$.
\end{proof}

\FloatBarrier
\begin{proof}[Proof of Theorem~\ref{thm:pairwise-impossibility}]
Fix $n\geq7$ and let $p$ be the profile in
\eqref{eq:pairwise-core-profile}. Since $\sum_ip_i=2$, the instance
$(v,h)=(p,2)$ has standard-quota vector $p$. Consider the support graph
introduced above, restricted to the small-party block
$L=\{2,\ldots,n\}$. Lemma~\ref{lem:two-disjoint-positive-pairs} says that
this graph is intersecting.

The total weight of its edges is
\[
 \Pr_p[S_p\subseteq L]
 =\Pr_p[1\notin S_p]
 =1-p_1=\frac25.
\]
By Lemma~\ref{lem:intersecting-edge-families}, the positive edges lie in a
star or a triangle. In the star case, the center is selected in every event
whose edge lies entirely in $L$, so its marginal is at least $2/5$. This
contradicts its prescribed marginal $7/[5(n-1)]<2/5$. In the triangle case,
every event counted among the small-party edges selects two of the triangle's
three vertices. Their marginal probabilities therefore sum to at least
$4/5$, whereas exact proportionality makes their sum
$21/[5(n-1)]<4/5$. This is again a contradiction.
\end{proof}

Consequently, the impossibility assumes neither continuity, neutrality, nor
full support. It applies to whatever positive support the method chooses: no
such support can realize the prescribed inclusion probabilities.

\begin{figure}[H]
\centering
\begin{tikzpicture}[
  ax/.style={draw=black!70, rounded corners=2pt, align=center,
             inner xsep=7pt, inner ysep=5pt, font=\small, fill=white},
  bad/.style={ax, draw=ravorange!70!black},
  main/.style={ax, draw=ravblue, line width=1pt, fill=ravlight},
  res/.style={font=\scriptsize, text=black!60, align=center},
  imp/.style={-{Latex[length=2.4mm]}, draw=black!70, line width=0.7pt},
  lab/.style={font=\scriptsize, text=black!70, fill=white, inner sep=1pt},
  node distance=8.5mm and 5mm,
]
\node[bad] (pw) {pairwise threshold monotonicity\\[-1pt]
  \textcolor{black!60}{\scriptsize
  full-support impossibility \citep{CGST+26a}; Thm.~\ref{thm:pairwise-impossibility}
  extends the result to $n\ge 7$ with no full-support assumption}};

\node[bad, below=of pw] (dpw) {disjoint pairwise threshold monotonicity\\[-1pt]
  \textcolor{black!60}{\scriptsize
  incompatible with quota and ex-ante proportionality for $n\ge 7$
  (Thm.~\ref{thm:pairwise-impossibility})}};

\node[main, below=17mm of dpw] (tm) {threshold monotonicity\\[-1pt]
  \textcolor{black!60}{\scriptsize
  Sampford \textcolor{ravgreen}{$\checkmark$} (Thm.~\ref{thm:sampford-threshold});
  conjectured in \citep{CGST+24a,CGST+26a}}};

\node[ax, below left=9mm and -13mm of tm] (ji) {joint-inclusion\\monotonicity\\[-1pt]
  \textcolor{black!60}{\scriptsize
  Sampford \textcolor{ravgreen}{$\checkmark$} (Cor.~\ref{cor:endpoint-events})}};

\node[ax, below right=9mm and -13mm of tm] (je) {joint-exclusion\\monotonicity\\[-1pt]
  \textcolor{black!60}{\scriptsize
  Sampford \textcolor{ravgreen}{$\checkmark$} (Cor.~\ref{cor:endpoint-events})}};

\node[ax, below=26mm of tm] (sel) {selection monotonicity\\[-1pt]
  \textcolor{black!60}{\scriptsize
  Sampford \textcolor{ravgreen}{$\checkmark$} \citep{CGST+26a}}};

\draw[imp] (pw) -- node[lab, fill=none, right=5pt]
  {restrict to disjoint $A,B$} (dpw);

\draw[imp] (dpw) -- node[lab, right=2pt, pos=0.82]
  {$B=N\setminus A$} (tm);

\draw[imp] ([xshift=-23mm]tm.south) -- coordinate[pos=0.48] (pji) (ji.north);

\draw[imp] ([xshift=23mm]tm.south) -- coordinate[pos=0.48] (pje) (je.north);

\node[lab, anchor=base east] at ([xshift=-2pt,yshift=2.5pt]pji |- pje)
  {threshold $|T|$};

\node[lab, anchor=base west] at ([xshift=2pt,yshift=2.5pt]pje)
  {threshold $\alpha$};

\draw[imp] (ji.south) -- node[lab, below left=1pt and -2pt, pos=0.45]
  {$|T|=\alpha$} (sel.west);

\draw[imp] (je.south) -- node[lab, below right=1pt and -2pt, pos=0.45]
  {$|T|=\alpha$} (sel.east);

\path (dpw.south) -- (tm.north) coordinate[midway] (mid);

\begin{scope}[on background layer]
  \fill[ravorange!10]
    ($(pw.north west)+(-11mm,3mm)$)
    rectangle
    ($(mid)+(49mm,0)$);
  \fill[ravorange!10]
    ($(mid)+(-49mm,0)$)
    rectangle
    ($(pw.north east)+(11mm,3mm)$);
\end{scope}

\draw[dashed, ravgray!90]
  ($(mid)+(-49mm,0)$) -- ($(mid)+(49mm,0)$);

\node[
  font=\scriptsize,
  text=ravorange!80!black,
  anchor=south east,
  align=right
] at ($(mid)+(49mm,0.8mm)$)
  {incompatible with quota and\\
   ex-ante proportionality for $n\ge 7$};

\node[
  font=\scriptsize,
  text=ravgreen!85!black,
  anchor=north east,
  align=right
] at ($(mid)+(49mm,-0.8mm)$)
  {satisfied by\\
   the Sampford method};

\end{tikzpicture}

\caption{Implications among the monotonicity axioms; arrows point from
stronger to weaker requirements and are labelled by the relevant
specialization. The shaded axioms are incompatible with quota and ex-ante
proportionality (Theorem~\ref{thm:pairwise-impossibility}),
whereas the Sampford method satisfies the unshaded requirements
(Theorem~\ref{thm:sampford-threshold},
Corollary~\ref{cor:endpoint-events}, and \citealp{CGST+26a}).
Joint-inclusion, joint-exclusion, and selection monotonicity concern the
residual selection lottery. The implications from threshold monotonicity
to the residual axioms use the standard embedding $q=p$ and
$h=\alpha$.}
\label{fig:axioms}
\end{figure}

\section{Conclusions}
\label{sec:discussion}

Our two results locate threshold monotonicity within randomized apportionment (Figure~\ref{fig:axioms}). The first is positive. Quota, ex-ante proportionality, and threshold monotonicity for arbitrary coalitions are compatible. Sampford sampling, a classical method, satisfies all three. This settles the conjecture of \citet{CGST+24a,CGST+26a}.  On the negative side, the pairwise strengthening fails from seven parties onward, even when restricted to disjoint coalitions. This impossibility applies whether or not the method has full support.

Our study on randomized apportionment sits at an exciting intersection of statistics and axiomatic social choice. 
We have barely scratched the surface of the axiomatic landscape. 
Pertinent to our results, several questions remain. For example, are there other natural rounding rules that satisfy threshold monotonicity? Finally, can threshold monotonicity be simultaneously achieved alongside house monotonicity?

\clearpage
\section*{Declaration on the use of generative AI}

Both results in the paper were derived via extensive and interactive use
of ChatGPT (Sol 5.6, OpenAI). The authors  verified all arguments; simplified, framed and
wrote the results; and take full responsibility for the manuscript.

\section*{Acknowledgments}

Haris Aziz and Simon Mackenzie are supported by the NSF-CSIRO
grant on ``Fair Sequential Collective Decision-Making'' (RG230833). Mashbat Suzuki is supported by the ARC Laureate Project FL200100204 on ``Trustworthy AI''.

\bibliographystyle{plainnat}
{\scriptsize
\raggedright
\setlength{\bibsep}{0pt}
\bibliography{randomized_apportionment_with_axioms}
}

\end{document}